\pgfplotsset{compat=1.18}
\newcommand{\N}{\mathbb{N}}
\newcommand{\R}{\mathbb{R}}
\newcommand{\D}{\mathcal{D}}
\newcommand{\Sphere}{\mathcal{S}}
\newcommand{\Z}{\mathbb{Z}}
\newcommand{\Q}{\mathbb{Q}}
\DeclareMathOperator{\dimH}{dim_H}
\newtheorem{theorem}{Theorem}
\newtheorem{lemma}[theorem]{Lemma}
\newtheorem{observation}[theorem]{Observation}
\theoremstyle{remark}
\newtheorem*{claim}{Claim}
\title{Algorithmic Information Bounds for Distances and Orthogonal Projections}
\author[P. Cholak]{Peter Cholak}
\address{University of Notre Dame}  
\email{Peter.Cholak.1@nd.edu}
\author[M. Cs\"ornyei]{Marianna Cs\"ornyei}
\address{University of Chicago}  
\email{csornyei@math.uchicago.edu}
\author[N. Lutz]{Neil Lutz}
\address{University of Notre Dame}
\email{nlutz2@nd.edu}
\author[P. Lutz]{Patrick Lutz}
\address{University of Michigan}
\email{pglutz@umich.edu}
\author[E. Mayordomo]{Elvira Mayordomo}
\address{Universidad de Zaragoza}
\email{elvira@unizar.es}
\author[D. Stull]{Donald M. Stull}
\address{University of Chicago}
\email{dmstull@uchicago.edu}
\begin{document}

\begin{abstract}
    We introduce a new technique for proving bounds on the Kolmogorov complexity of geometric objects in Euclidean space, such as points and lines. We apply this technique to prove two theorems on algorithmic information theory, both of which have consequences for well-known problems in geometric measure theory. First, we show that for any point $x$ in the plane and any other point $y$ sufficiently independent of $x$, the distance between $x$ and $y$ retains at least half the complexity of the original point $x$. By the point-to-set principle of J. Lutz and N. Lutz, this yields an improved lower bound on the Hausdorff dimension of pinned distance sets, a topic closely related to Falconer's distance set conjecture. Second, we prove an analogous result for orthogonal projections: for any point $x$ in the plane and any line through the origin which is sufficiently independent of $x$, the projection of $x$ onto that line retains at least half the complexity of $x$. As a consequence, we obtain a generalization of a theorem of Bourgain on exceptional sets for orthogonal projections.
\end{abstract}

\maketitle

\section{Introduction}

This paper is a contribution to a growing area of research which connects algorithmic information theory and geometric measure theory and which is based on a close correspondence between Kolmogorov complexity and Hausdorff dimension. This correspondence has two facets. First, there is a conceptual analogy between Kolmogorov complexity and Hausdorff dimension. For example, using Kolmogorov complexity, one can define a notion of effective dimension for objects such as infinite binary sequences and real numbers which shares many features with Hausdorff dimension~\cite{Stai93, CaiHart94, Lutz03b}. Second, there is a specific technical result, proved by J. Lutz and N. Lutz and known as the \emph{point-to-set principle}, which allows one to use facts about Kolmogorov complexity and effective dimension to prove theorems about Hausdorff dimension~\cite{LutLut18}.

In recent years, the point-to-set principle has been applied to make progress on a few long-standing questions in geometric measure theory. For example, N. Lutz and Stull used it to improve the best then-known lower bound for the Furstenberg set conjecture (a generalization of the Kakeya conjecture)~\cite{LutStu20} and Stull used it to improve the best known lower bound for Falconer's distance set conjecture~\cite{Stull22c}, an important open question in geometric measure theory~\cite{Iosevich2019}.

The main goal of this paper is to introduce a new technique for proving lower bounds on the Kolmogorov complexity of geometric objects in Euclidean space. One of our primary motivations for introducing this new technique is that, when used in conjunction with the point-to-set principle, it seems especially useful for proving results in geometric measure theory which had previously appeared difficult to approach using the methods of algorithmic information theory. However, even ignoring these applications to geometric measure theory, we believe that the theorems in algorithmic information theory which can be proved using our technique are interesting in their own right.

In this paper, we will give two example applications of our technique. In each case, we will first use our technique to prove a theorem on algorithmic information theory and then use the point-to-set principle to transfer this theorem to the setting of geometric measure theory. We have included these applications not just for their own sake, but also as a demonstration of the utility of our technique, which we believe likely has further applications within geometric measure theory. In fact, in unpublished work, Fiedler~\cite{Fiedler25} has already used our technique to prove a new theorem on the Hausdorff dimension of planar extensions of subsets of Euclidean space, a topic which is closely connected to the Kakeya Conjecture.

Our first application concerns the following question: suppose that $x$ and $y$ are points in $\R^2$ and that $x$ is relatively complex (in the sense of Kolmogorov complexity). Does this imply anything about the complexity of $|x - y|$, the distance between $x$ and $y$? Of course, if we make no assumptions whatsoever about $y$ then the answer is no. However, we will show that if $x$ and $y$ are reasonably independent then $|x - y|$ must retain at least half the complexity of $x$.

Slightly more precisely, let $K_r(x)$ denote the Kolmogorov complexity of an $r$-bit approximation of $x$ (i.e. of the first $r$ bits of both coordinates of $x$) and let $K_r^y(z)$ denote the Kolmogorov complexity of an $r$-bit approximation of $z$ relative to an oracle for $y$. We will show, in Theorem~\ref{thm:pinned}, that if $x$ and $y$ satisfy a mild independence condition then
\begin{equation}\label{eq:pinnedintro}
  K^y_r(|x - y|) \geq \frac{K_r(x)}{2} - o(r).
\end{equation}

We will then use this theorem to prove a new lower bound on the Hausdorff dimension of pinned distance sets. Given a set $E \subseteq \R^2$, the \emph{distance set} of $E$ is the set $\Delta E = \{|x - y| \mid x, y \in E\}$ of pairwise distances between elements of $E$. Given a set $E \subseteq \R^2$ and a point $x \in \R^2$, the \emph{pinned distance set} is the set $\Delta_xE = \{|x - y| \mid y \in E\}$ of distances between $x$ and elements of $E$. Falconer's distance set conjecture states that if $\dim_H(E) > 1$ then $\dim_H(\Delta E) = 1$. Most progress on this conjecture has come from estimating the dimension of pinned distance sets. Using our Theorem~\ref{thm:pinned}, we establish in Theorem~\ref{thm:pinnedclassical} that for every analytic set $E \subseteq \R^2$ satisfying $s := \dim_H(E) \leq 1$,
\[
  \sup_{x \in E}\dim_H(\Delta_xE) \geq \frac{3s}{4}.
\]
This improves on the best previously known bounds, due to Shmerkin and Wang~\cite{ShmWang25} and Fiedler and Stull~\cite{FieStu24a}, for all $0 < s < 1$.

Our second application concerns the complexity of orthogonal projections of points in the plane. In particular, for any point $x \in \R^2$ and direction $e \in S^1$, let $p_ex$ denote the orthogonal projection of $x$ onto the line in direction $e$. In Theorem~\ref{thm:orth}, we show that if $e$ is sufficiently independent of $x$ then
\begin{equation}\label{eq:orthintro}
K^e_r(p_ex) \geq \frac{K_r(x)}{2} - o(r).
\end{equation}
In other words, the projection of $x$ retains at least half the complexity of $x$.

We then apply this to generalize a theorem of Bourgain~\cite{MR2763000} on the Hausdorff dimension of exceptional sets for orthogonal projections. Our theorem extends Bourgain's from analytic sets to a more general class of sets which admit \emph{optimal Hausdorff oracles}~\cite{Stull22a}. This result is similar in spirit to Shmerkin's~\cite{shmerkin2023non} generalization of Bourgain's theorem in the context of distance sets.

We now outline the main idea of our technique. We will illustrate the idea in the context of our Theorem~\ref{thm:pinned}, discussed above. Suppose that the conclusion of that theorem fails and that in general, $K_r^y(|x - y|)$ may be much less than $K_r(x)/2$, even when $x$ and $y$ are fairly independent. By standard reasoning, we may assume that it actually fails for two different values of $y$, say $y_1$ and $y_2$ (with respect to the same value of $x$).

It is now tempting to reason as follows: if we know $y_1$ and $y_2$, as well as the distances $|x - y_1|$ and $|x - y_2|$ then we can easily recover $x$ with very little additional information---the main point is that the circles with centers $y_1$ and $y_2$ and radii $|x - y_1|$ and $|x - y_2|$ should in general intersect in only two points. Thus the total information content of $x$ should be no larger than the combined information content of $y_1, y_2, |x - y_1|,$ and $|x - y_2|$. Furthermore, since we are assuming that $y_1$ and $y_2$ are independent of $x$, we can ignore the contribution of $y_1$ and $y_2$ and conclude that the information content of $x$ should be no larger than the combined information content of $|x - y_1|$ and $|x - y_2|$. Thus it would seem that at least one of $|x - y_1|$ and $|x - y_2|$ must have at least half the information content of $x$, which is exactly what we want to conclude.

However, there is an important problem with this argument. Whenever we talk about the information content of geometric objects, like points in $\R^2$, what we really mean is the Kolmogorov complexity of finite approximations of these objects. When we assumed that the theorem fails for $x, y_1$ and for $x, y_2$, what we actually meant was that for infinitely many $r$, we have $K^{y_1}_r(|x - y_1|) < K_r(x)/2$ and for infinitely many $r$, $K^{y_2}_r(|x - y_2|) < K_r(x)/2$. But the values of $r$ for which these inequalities hold may be different for $y_1$ and $y_2$. In other words, it may be that at every level of precision, either the finite approximation to $|x - y_1|$ or the finite approximation to $|x - y_2|$ at that precision has low complexity, but they happen to never both have low complexity at the same time.

What we really need in order for the argument sketched above to work is not just some $y_1$ and $y_2$ for which the theorem fails with respect to the same $x$, but $y_1$ and $y_2$ for which the theorem fails with respect to the same $x$ \emph{and the same level of precision $r$}. At the core of our technique is a method for finding such $y_1, y_2$ and $r$.

The key idea is that we replace $x$ with a different point with similar properties, which we will refer to as a \emph{surrogate point} for $x$. In order to do so, we first show that if the theorem fails then for each $r$, there are many pairs $x, y$ for which it fails. We then use a combinatorial lemma (Lemma~\ref{lem:comb}, proved by a Cauchy-Schwarz double counting argument) to show that it fails for so many pairs that we can find two pairs $x_1, y_1$ and $x_2, y_2$ with $x_1 = x_2$. Of course, this is only a rough description of the idea; the full details will be given in the proofs of Theorems~\ref{thm:pinned} and~\ref{thm:orth} below.

This work is an instance of applying \emph{effective methods} to geometric measure theory via the point-to-set principle~\cite{LutLut18}, an approach that has had significant recent success in studying the fractal dimensions of orthogonal projections~\cite{LutStu24,Stull22a,FieStu26}, pinned distance sets~\cite{Stull22c,FieStu23,FieStu24a,AlBuWi24}, and other geometric sets~\cite{LutStu20,Lutz21,NaPuS24,Shen23,LuQiYu24,BusFie24}. Beyond the measure-theoretic applications, pursuing these approaches has also sparked new innovations within the theory of computing, including results on the algorithmic information properties of points on specific lines~\cite{LutLut15,LutStu20,LutStu22,Stull25,LuMaWh26} and point-to-set principles for other domains and other types of fractal dimension~\cite{LuLuMa23a,LuLuMa23b,Mayordomo25,MayNie25}.

\section{Preliminaries}

\subsection{Kolmogorov Complexity in Discrete Domains}\label{ssec:strings}

We now briefly state the basic definitions and essential properties of (prefix) Kolmogorov complexity, which will be used repeatedly in our proofs. For a detailed treatment of this topic, see Chapter 3 of~\cite{LiVit19}.

Let $U$ be an oracle universal prefix Turing machine, which will remain fixed throughout this paper. For all binary strings $\sigma,\tau\in\{0,1\}^*$, the \emph{conditional (prefix) Kolmogorov complexity} of $\sigma$ given $\tau$ is
\[K(\sigma\mid\tau)=\min\big\{|\pi|\;\big|\; \pi\in\{0,1\}^*\text{ and }U^\emptyset(\pi,\tau)=\sigma\big\},\]
and the \emph{(prefix) Kolmogorov complexity} of $\sigma$ is
\[K(\sigma)=K(\sigma\mid\lambda),\]
where $\lambda$ denotes the empty binary string. Standard binary encoding readily extends these definitions to other countable domains, including natural numbers, rationals, dyadic rationals, and vectors of these.

Prefix Kolmogorov complexity obeys the \emph{symmetry of information}: For all $\sigma,\tau\in\{0,1\}^*$,
\begin{equation}\label{eq:soi}
    K(\sigma,\tau)=K(\sigma\mid \tau,K(\tau))+K(\tau)+O(1).
\end{equation}

The prefix Kolmogorov complexity of a string is never significantly longer than the string's length. In particular, for all $n\in\N$,
\begin{equation}\label{eq:nat}
    K(n)\leq \log n+2\log\log n+O(1).
\end{equation}

A family $\{E_r\mid r\in\N\}$ of sets is \emph{uniformly computably enumerable} if the set $\{\langle x,r\rangle\mid x\in E_r\}$ is computably enumerable. Where the dependence on $r$ is clear, we will say that an individual set $E_r$ in such a family as uniformly computably enumerable. We will often show that a set is large by showing that it is uniformly computably enumerable and contains objects of high complexity. In particular, if $\sigma,\tau\in\{0,1\}^*$ and $E$ is a finite set such that $\sigma\in E$ and $E$ is uniformly computably enumerable given $\tau$, then
\begin{equation}\label{eq:cardcomp}
    K(\sigma\mid\tau)\leq\log|E|+2\log\log|E|+O(1).
\end{equation}
Intuitively, this is true because, given $\tau$, we could algorithmically specify $\sigma\in E$ by giving its position in the enumeration of $E$, and this position is at most $|E|$.

\subsection{Kolmogorov Complexity in Euclidean Spaces}

To apply Kolmogorov complexity in Euclidean spaces, we consider the complexity of finite-precision rational approximations. 

For $x\in\R^n$ and $\varepsilon>0$, let $\mathcal{B}_\varepsilon(x)$ denote the open ball of radius $\varepsilon$ about $x$. For all $\sigma\in\{0,1\}^*$, $E,F\subseteq\R^n$, $r,s\in\N$, and $x,y\in\R^n$, define
\begin{align*}
    K(E\mid\sigma)&=\min_{q\in E\cap\Q^n}K(q\mid \sigma)\\
    K(E)&=K(E\mid\lambda)\\
    K(E\mid F)&=\max_{q\in F\cap\Q^n}K(E\mid q)\\
    K(\sigma\mid E)&=\max_{q\in E\cap\Q^n}K(\sigma\mid q)\\
    K_r(x)&=K(\mathcal{B}_{2^{-r}}(x))\\
    K_{r,s}(x\mid y)&=K(\mathcal{B}_{2^{-r}}(x)\mid \mathcal{B}_{2^{-s}}(y)).
\end{align*}

While $K_r(x)$ is defined as the minimum complexity of a rational point that is $2^{-r}$-close to $x$, it is often more convenient to analyze the complexity of a specific rational point that is close to $x$, namely, the point obtained by truncating, $r$ bits to the right of the binary point, the binary expansions of each of $x$'s coordinates. Fortunately, these two complexities can only differ by $O(\log r)$, and we will frequently switch between the two types of approximations in our proofs.

For $r\in\N$, let $\D_r=2^{-r}\Z$ denote the \emph{$r$-dyadic rationals}, and for $x=(x_1,\ldots,x_n)\in\R^n$, let
\[\lfloor x\rfloor_r=(2^{-r}\lfloor 2^r x_1\rfloor,\ldots,2^{-r}\lfloor 2^r x_n\rfloor)\in\D_r^n.\]
For $e\in\Sphere^1$, we identify $e$ with the real number $a\in[0,1)$ such that $e=(\cos(2\pi a),\sin(2\pi a))$, so $\lfloor e\rfloor_r$ denotes $\lfloor a\rfloor_r$ in this case.

Let $C\in\N$ be a constant such that, for all $r\in\N$, $\sigma\in\{0,1\}^*$, and $x\in\R\cup\R^2\cup\Sphere^1$,
\begin{equation}\label{eq:dyadic}
    |K_r(x\mid \sigma)-K(\lfloor x\rfloor_r\mid \sigma)|<C\log r
\end{equation}
and
\begin{equation}\label{eq:dyadiccond}
    |K(\sigma\mid \mathcal{B}_{2^{-r}}(x))-K(\sigma\mid \lfloor x\rfloor_r)|<C\log r.
\end{equation}
For details on the existence of this constant, see Section 2.3 of~\cite{LutStu20}.

A closely related useful fact is that the complexity of an approximation is only linearly sensitive to changes in the level of precision. In particular, it was shown by~\cite{CasLut15,LutLut18} that for each $n\in\N$, there is a constant $C_0\in\N$ such that for all $x,y\in\R^n$ and all $r,s,t\in\N$,
\begin{equation}\label{eq:precision}
    K_{r+t,s}(x\mid y)\leq K_{r,s}(x\mid y)+nt+C_0\log(r+s+t)
\end{equation}
and
\begin{equation}\label{eq:condprecision}
    K_{r,s}(x\mid y)\leq K_{r,s+t}(x\mid y)+nt+C_0\log(r+s+t).
\end{equation}

\subsection{Hausdorff Dimension and Effective Hausdorff Dimension}\label{ssec:psp}
For $E\subseteq\R^n$ and $s>0$, the \emph{$s$-dimensional Hausdorff measure} of $E$ is
\[\mathcal{H}^s(E)=\lim_{\delta\to 0^+}\inf\left\{\sum_{i\in\N}|U_i|^s \;\middle|\; E\subseteq\bigcup_{i\in\N}U_i\text{ and each }|U_i|<\delta\right\},\]
where $|U_i|$ denotes the diameter of $U_i$, and the \emph{Hausdorff dimension} of $E$ is
\[\dimH E=\inf\{s\mid \mathcal{H}^s(E)=0\}.\]
The statement of Theorem~\ref{thm:pinnedclassical}, like many classical results on Hausdorff dimension, requires the underlying set to be \emph{analytic}, i.e., a continuous image of a Borel set.

The \emph{(effective Hausdorff) dimension} of an individual point $x\in\R^n$ is defined~\cite{DISS,KCCCHD} as
\[\dim(x)=\liminf_{r\to\infty}\frac{K_r(x)}{r}.\]

As $U$ was an oracle prefix machine, we can readily define Kolmogorov complexities \emph{relative} to any oracle $A\subseteq\N$. For example, for all $\sigma,\tau\in\{0,1\}^*$,
\[K^A(\sigma\mid\tau)=\min\left\{|\pi|\mid \pi\in\{0,1\}^*\text{ and }U^A(\pi,\tau)=\sigma\right\}.\]
All other Kolmogorov complexity quantities in this section, including $\dim(x)$, are relativized in the same way, always denoted by placing the oracle in the superscript. The bounds~\eqref{eq:soi}--\eqref{eq:condprecision} all hold relative to an arbitrary oracle $A\subseteq\N$. Given a point $y\in\R^n$, we will write $K^y$ and $\dim^y$ as shorthand for $K^{A_y}$ and $\dim^{A_y}$, where $A_y\subseteq\N$ is an oracle that encodes $\lfloor y\rfloor_r$ for all $r\in\N$. When multiple oracle sets are placed in the superscript, that indicates relativization to their Turing join.

Our Hausdorff dimension bounds will rely on the \emph{point-to-set principle}~\cite{LutLut18}, which states that, for all $E\subseteq\R^n$,
\begin{equation}\label{eq:psp}
    \dimH E=\adjustlimits\min_{A\subseteq\N}\sup_{x\in E}\dim^A(x).
\end{equation}
An oracle $A$ that achieves the minimum in~\eqref{eq:psp} is called a \emph{Hausdorff oracle} for $E$. We will also use the correspondence principle of Hitchcock~\cite{Hitchcock05}, which says that, if $E\subseteq\R^n$ is computably compact relative to an oracle $A$, then $A$ is a Hausdorff oracle for $E$. That is,
\begin{equation}\label{eq:pspCompact}
    \dimH E = \sup_{x\in E}\dim^A(x).
\end{equation}

A Hausdorff oracle $A$ for $E$ is an \emph{optimal Hausdorff oracle} for $E$ if, for every auxiliary oracle $B\subseteq\N$ and every $\varepsilon>0$, there is some $x\in E$ such that
\[\dim^{A,B}(x)\geq\dimH(E)-\varepsilon\]
and, for almost every $r\in\N$,
\[K_r^{A,B}(x)\geq K_r^A(x)-\varepsilon r.\]
Stull~\cite{Stull22a} formulated this definition and proved that if $E$ is an analytic set, then there are optimal Hausdorff oracles for $E$, and that the converse does not hold.

\section{Combinatorial Lemma}\label{sec:comb}

The proofs of each of our main Kolmogorov complexity bounds will rely on finding objects $u$, $v$, and $d$ that satisfy two criteria: First, $d$ is in some sense ``suitable'' for each of $u$ and $v$, belonging to certain sets $N_u$ and $N_v$. Second, $u$ and $v$ are sufficiently dissimilar with respect to $d$, satisfying $u\not\sim_d v$, where $\sim_d$ is a binary relation that may depend on $d$. The following lemma says that if every $v$ has many suitable $d$ and, for each such $d$, there are few $u$ such that $u\sim_d v$, then there is some pair $u,v$ such that many $d$ satisfy these two criteria for that pair.

\begin{lemma}\label{lem:comb}
    Let $X$ and $V$ be finite sets, $\alpha\in(0,1/2)$, and, for each $d\in X$, $\sim_d$ a binary relation on $V$. Suppose there is a collection $\{N_v\mid v\in V\}$ of finite subsets of $X$ such that, for all $v\in V$, we have $|N_v|\geq\alpha|X|$ and, for all $d\in N_v$,
    \begin{equation}\label{eq:fewsim}
        |\{u\in V\mid u\sim_d v\}|\leq \frac{\alpha^2}{4}|V|.
    \end{equation}
    Then there exist $u,v\in V$ such that $|\{d\in X\mid d\in N_u\cap N_v\text{ and }u\not\sim_d v\}|>\frac{\alpha^2}{2}|X|$.
    In particular, if there is a single binary relation $\sim$ such that $\sim_d$ is $\sim$ for all $d\in X$, then there exist $u,v\in V$ such that $u\not\sim v$ and $|N_u\cap N_v|>\frac{\alpha^2}{2}|X|$.
\end{lemma}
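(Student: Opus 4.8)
The plan is a double-counting argument, using Cauchy--Schwarz, over triples $(u,v,d)\in V\times V\times X$ with $d\in N_u\cap N_v$. For $d\in X$ write $M_d=\{v\in V\mid d\in N_v\}$, so that the number of such triples is exactly $\sum_{d\in X}|M_d|^2$. Counting the incidences the other way, $\sum_{d\in X}|M_d|=\sum_{v\in V}|N_v|\geq\alpha|X|\,|V|$ by the hypothesis $|N_v|\geq\alpha|X|$. Hence, by Cauchy--Schwarz (equivalently, the power mean inequality),
\[
\sum_{d\in X}|M_d|^2\;\geq\;\frac{1}{|X|}\Bigl(\sum_{d\in X}|M_d|\Bigr)^2\;\geq\;\alpha^2\,|X|\,|V|^2 .
\]

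Next I would bound the number of \emph{bad} triples, i.e.\ those that in addition satisfy $u\sim_d v$. If $v\in V$ and $d\in N_v$, then~\eqref{eq:fewsim} applies and bounds the number of $u\in V$ with $u\sim_d v$ by $\tfrac{\alpha^2}{4}|V|$. Summing over $d\in N_v$ and then over $v\in V$, and using $\sum_{v\in V}|N_v|\leq|X|\,|V|$, there are at most $\tfrac{\alpha^2}{4}\,|X|\,|V|^2$ triples $(u,v,d)$ with $d\in N_v$ and $u\sim_d v$; a fortiori at most that many with $d\in N_u\cap N_v$ and $u\sim_d v$. Subtracting from the previous display, the number of \emph{good} triples $(u,v,d)$ with $d\in N_u\cap N_v$ and $u\not\sim_d v$ is at least $\tfrac{3\alpha^2}{4}\,|X|\,|V|^2$. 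Averaging over the $|V|^2$ pairs $(u,v)$ yields one pair for which $|\{d\in X\mid d\in N_u\cap N_v\text{ and }u\not\sim_d v\}|\geq\tfrac{3\alpha^2}{4}|X|>\tfrac{\alpha^2}{2}|X|$, which is the first conclusion.

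For the \emph{in particular} clause, suppose every $\sim_d$ equals a fixed relation $\sim$. For the pair $(u,v)$ just produced, the set $\{d\in X\mid d\in N_u\cap N_v,\ u\not\sim_d v\}$ equals $N_u\cap N_v$ if $u\not\sim v$ and is empty if $u\sim v$; since we showed it has more than $\tfrac{\alpha^2}{2}|X|\geq 0$ elements, it is nonempty, so $u\not\sim v$ and $|N_u\cap N_v|>\tfrac{\alpha^2}{2}|X|$. The only real care needed anywhere is to invoke~\eqref{eq:fewsim} only at pairs $(v,d)$ with $d\in N_v$ (where it is valid) and to keep straight which sums are bounded by $|X|$ and which by $|V|$; organizing the triple count around the sets $M_d$ makes both of these automatic, so I expect no serious obstacle. (Implicitly $X$ and $V$ are nonempty, since otherwise the asserted conclusion cannot hold.)
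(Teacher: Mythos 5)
Your proof is correct and uses the same basic strategy as the paper's---a Cauchy--Schwarz double-counting argument over triples $(u,v,d)$ with $d\in N_u\cap N_v$---but it is organized more cleanly and, incidentally, more robustly. The paper argues by contradiction and restricts its triple count to the ``popular'' values of $d$, namely those with $|M_d|>\frac{\alpha^2}{4}|V|$ (so that the per-$d$ lower bound $|M_d|\bigl(|M_d|-\frac{\alpha^2}{4}|V|\bigr)$ on good pairs is nonnegative); after Cauchy--Schwarz this produces $\bigl((\alpha-\alpha^2/4)^2-\frac{\alpha^2}{4}\bigr)|V|^2|X|$ as a lower bound for the number of good triples, which the paper asserts exceeds $\frac{\alpha^2}{2}|V|^2|X|$ on the grounds that $\alpha\in(0,1)$. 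That final inequality in fact only holds when $\alpha<4-2\sqrt{3}\approx 0.536$, so the paper's argument as written does not establish the lemma for $\alpha$ close to $1$; this is harmless in the paper's applications, where $\alpha$ is exponentially small, but it is a small gap in the proof of the lemma as stated. Your version sidesteps the popular-$d$ restriction entirely: you lower-bound $\sum_d|M_d|^2\geq\alpha^2|X||V|^2$ by Cauchy--Schwarz, upper-bound the bad triples globally by $\frac{\alpha^2}{4}|X||V|^2$ via~\eqref{eq:fewsim}, subtract, and average. This yields the stronger constant $\frac{3\alpha^2}{4}|X|$ for the selected pair, works for every $\alpha\in(0,1)$, and thus proves the lemma as stated (modulo the implicit nonemptiness of $X$ and $V$, which you flag and which both arguments require). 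Your handling of the ``in particular'' clause is also correct.
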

\begin{proof}
    Assume toward a contradiction that $|\{d\in X\mid d\in N_u\cap N_v\text{ and }u\not\sim_d v\}|\leq\frac{\alpha^2}{2}|X|$
    holds for all $u,v\in V$. Then we can trivially bound the cardinality of the set
    \[B=\{(d,u,v)\mid d\in N_u\cap N_v,\ u,v\in V,\ \text{and}\ u\not\sim_d v.\}\]
    by $|B|\leq \frac{\alpha^2}{2}|V|^2|X|$.

    We now derive a lower bound on $|B|$ to contradict this upper bound. For every $d\in X$, let $n_d=|\{v\in V\mid d\in N_v\}|$,
    and define the set $\mathcal{E}=\{d\in X\mid n_d>\alpha^2|V|/4\}$. Trivially,
    \begin{equation}\label{eq:sumndupper}
        \sum_{d\in\mathcal{E}}n_d\leq |V||X|
    \end{equation}
    and we can bound this sum from below by
    \begin{align}
        \sum_{d\in \mathcal{E}}n_d&=\sum_{d\in X}n_d-\sum_{d\in X\setminus\mathcal{E}}n_d\notag\\
        &=\sum_{v\in V}|N_v|-\sum_{d\in X\setminus\mathcal{E}}n_d\notag\\
        &\geq\alpha|V||X|-\sum_{d\in X\setminus\mathcal{E}}n_d\notag\\
        &\geq (\alpha-\alpha^2/4)|V||X|.\label{eq:sumndlower}
    \end{align}
    Therefore,
    \begin{align*}
        |B|&=\sum_{d\in X}\left|\left\{(u,v)\in V^2\;\middle\vert\; d\in N_u\cap N_v\text{ and }u\not\sim_d v\right\}\right|\\
        &\geq \sum_{d\in\mathcal{E}} n_d\cdot\left(n_d-\frac{\alpha^2}{4}|V|\right)\tag{by~\eqref{eq:fewsim}}\\
        &=\sum_{d\in\mathcal{E}} n_d^2-\frac{\alpha^2}{4}|V|\sum_{d\in\mathcal{E}} n_d\\
        &\geq\sum_{d\in\mathcal{E}} n_d^2-\frac{\alpha^2}{4}|V|^2|X|\tag{by~\eqref{eq:sumndupper}}\\
        &\geq \frac{1}{|X|}\bigg(\sum_{d\in\mathcal{E}}n_d\bigg)^2-\frac{\alpha^2}{4}|V|^2|X|\tag{by Cauchy--Schwarz}\\
        &\geq\frac{((\alpha-\alpha^2/4)|V||X|)^2}{|X|}-\frac{\alpha^2}{4}|V|^2|X|\tag{by~\eqref{eq:sumndlower}}\\
        &>\frac{\alpha^2}{2}|V|^2|X|,\tag{$\alpha\in(0,1/2)$}
    \end{align*}
    contradicting our upper bound on $|B|$.
\end{proof}

\section{Kolmogorov Complexity Bound for Distances}

The proof of our first main Kolmogorov complexity bound, Theorem~\ref{thm:pinned}, relies on triangulating the approximate location of a point $d$ using its approximate distance from two other points $u$ and $v$. Geometrically, this places $d$ in the intersection of two thin annuli, centered at $u$ and $v$. We will use the following lemma of Wolff~\cite{Wolff97} to bound the size of that intersection.

\begin{lemma}[Wolff~\cite{Wolff97}]\label{lem:wolff}
    Let $\rho_1,\rho_2\in[0.99,1.01]$, $x_1,x_2\in [0,0.01]^2$, and $\varepsilon>0$. Let $A_1$ and $A_2$ be the open $\varepsilon$-neighborhoods of the circles with centers $x_1$ and $x_2$ and radii $\rho_1$ and $\rho_2$, respectively. Then $A_1\cap A_2$ is contained within $O(1)$ sets of the form
    \[\{y\in A_1\mid x_1+\rho_1 e_{x_1,y}\in\gamma\},\]
    where each $\gamma$ is an arc of length
    \[\frac{2\pi\rho_1\varepsilon}{\sqrt{\max(|x_1-x_2|+|\rho_1-\rho_2|,\varepsilon)\max(||x_1-x_2|-|\rho_1-\rho_2||,\varepsilon)}}.\]
\end{lemma}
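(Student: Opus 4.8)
The plan is to reduce the planar statement to a one-variable sublevel-set estimate by projecting radially onto the circle $C_1$ with center $x_1$ and radius $\rho_1$, and then to bound the length of that sublevel set by a short case analysis keyed to the two $\max(\cdot,\varepsilon)$ factors in the denominator.

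Set $a := |x_1-x_2|$ and $b := |\rho_1-\rho_2|$; the hypotheses force $a\le\tfrac{1}{50}$ and $\rho_1,\rho_2\in[0.99,1.01]$, and we may assume $\varepsilon$ is below an absolute constant, since otherwise $a+b<\varepsilon$, the stated arclength equals $2\pi\rho_1$, and the conclusion is immediate. Having extracted these facts, I would apply an isometry of the plane and assume $x_1=(0,0)$, $x_2=(a,0)$; write $e_\theta:=(\cos\theta,\sin\theta)$. For $y\in A_1\cap A_2$, its radial projection $p:=\rho_1 e_{x_1,y}$ onto $C_1$ satisfies $|p-y|<\varepsilon$, hence $\bigl|\,|p-x_2|-\rho_2\,\bigr|<2\varepsilon$; since $|p-x_2|+\rho_2\in[1.9,2.1]$ in this regime, this is equivalent to $|g(\theta)|<C\varepsilon$ for an absolute constant $C$, where $\theta$ is the angle of $y-x_1$ and
\[
    g(\theta):=|\rho_1 e_\theta-x_2|^2-\rho_2^2=B-2\rho_1 a\cos\theta,\qquad B:=\rho_1^2+a^2-\rho_2^2.
\]
Thus the radial projection of $A_1\cap A_2$ onto $C_1$ lies in $\{\,x_1+\rho_1 e_\theta:|g(\theta)|<C\varepsilon\,\}$, and it suffices to show $\{\theta:|g(\theta)|<C\varepsilon\}$ is a union of at most two intervals, each of angular length $O\!\bigl(\varepsilon/\sqrt{\max(a+b,\varepsilon)\max(|a-b|,\varepsilon)}\bigr)$: then covering the corresponding arcs of $C_1$ by arcs of the length stated in the lemma needs only $O(1)$ of them, whose radial sectors cover $A_1\cap A_2$.

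Since $g'(\theta)=2\rho_1 a\sin\theta$ vanishes only at $\theta\in\{0,\pi\}$, $g$ is monotone on each of $(0,\pi)$ and $(-\pi,0)$, so $\{|g|<C\varepsilon\}$ is indeed a union of at most two intervals, reflections of one another through $\theta=0$. Writing $h:=C\varepsilon/(2\rho_1 a)$ and $c_0:=B/(2\rho_1 a)$, the part in $[0,\pi]$ equals $\arccos\bigl((c_0-h,c_0+h)\cap[-1,1]\bigr)$, of length $\int_{\max(c_0-h,-1)}^{\min(c_0+h,1)}(1-u^2)^{-1/2}\,du$ (interpreted as $0$ when that interval is empty, i.e.\ when the circles are too far apart to interact). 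The crux is bounding this integral, and the key input is the identity
\[
    (2\rho_1 a)^2\bigl(1-c_0^2\bigr)=(2\rho_1 a)^2-B^2=(\rho_1+\rho_2-a)(\rho_1+\rho_2+a)\bigl(a^2-(\rho_1-\rho_2)^2\bigr),
\]
whose first two factors are $\Theta(1)$ in our regime, so that $a^2(1-c_0^2)$ is comparable to $(a+b)(a-b)$ whenever $c_0\in[-1,1]$, while for $a<b$ one has $|c_0|>1$ and the clipped interval, if nonempty, is a short sliver adjacent to $\pm1$. I would then split into the \emph{transversal regime} $a-b\gtrsim\varepsilon$, where $1-c_0^2\gtrsim h$ keeps $(c_0-h,c_0+h)$ bounded away from $\pm1$, so $1-u^2\asymp1-c_0^2$ there and the integral is $\asymp h/\sqrt{1-c_0^2}\asymp\varepsilon/\sqrt{(a+b)(a-b)}$; and the \emph{near-tangent regime} $|a-b|\lesssim\varepsilon$ (which subsumes $a<b$, where $b-a\gtrsim\varepsilon$ forces emptiness), in which $c_0$ lies within $O(h)$ of $+1$ or $-1$, the integrand is comparable to $\bigl(2(1-u)\bigr)^{-1/2}$ or $\bigl(2(1+u)\bigr)^{-1/2}$ near the relevant endpoint, and the integral is $O(\sqrt h)=O(\sqrt{\varepsilon/a})$; using $a+b\asymp a$ when $a\gtrsim\varepsilon$ and capping at $2\pi$ otherwise (where the stated length is already $\gtrsim\rho_1$), this is $O\!\bigl(\varepsilon/\sqrt{(a+b)\varepsilon}\bigr)$. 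Matching each bound against $\max(a+b,\varepsilon)$ and $\max(|a-b|,\varepsilon)$, together with the degenerate case $a=0$ ($g$ constant; $A_1\cap A_2$ empty or a bounded number of full-length arcs) and the symmetric analysis near $\theta=\pi$, finishes the proof.

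I expect the main obstacle to be exactly this integral estimate: one must check uniformly that as $c_0\to\pm1$ the arclength saturates at $\sim\rho_1\sqrt{\varepsilon/(a+b)}$ rather than diverging like $\rho_1\varepsilon/\sqrt{(a+b)(a-b)}$ --- precisely what the $\max(\cdot,\varepsilon)$ factors encode --- and that the comparison constants survive all sign cases for $\rho_1-\rho_2$ and all positions of the clipped interval relative to the endpoints $\pm1$. Everything else is routine plane geometry.
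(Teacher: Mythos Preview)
The paper does not prove this lemma at all: it is stated in the appendix as a citation of Wolff~\cite{Wolff97} and used as a black box inside the proof of Theorem~\ref{thm:pinned}. So there is no ``paper's own proof'' to compare against.

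That said, your proposal is the standard route to this estimate and is sound. Radially projecting $A_1\cap A_2$ onto $C_1$, linearizing $||\rho_1 e_\theta-x_2|-\rho_2|<2\varepsilon$ to the sublevel set $|g(\theta)|<C\varepsilon$ with $g(\theta)=B-2\rho_1 a\cos\theta$, and then bounding $|\{\theta:|\cos\theta-c_0|<h\}|$ via the factorization
\[
(2\rho_1 a)^2(1-c_0^2)=\bigl((\rho_1+\rho_2)^2-a^2\bigr)\bigl(a^2-(\rho_1-\rho_2)^2\bigr)
\]
is exactly how Wolff's Lemma~1.4 is proved (and how it is redone in later expositions, e.g.\ Schlag's survey). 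Your two-regime split---transversal ($a-b\gtrsim\varepsilon$, integrand $\asymp(1-c_0^2)^{-1/2}$ on an interval of length $2h$) versus near-tangent ($|a-b|\lesssim\varepsilon$, integrand $\asymp(1\mp u)^{-1/2}$ near an endpoint, integral $\asymp\sqrt{h}$)---correctly produces the two $\max(\cdot,\varepsilon)$ factors, and your handling of the degenerate cases ($a=0$; $b-a\gtrsim\varepsilon$ forcing emptiness via the triangle inequality; $\varepsilon$ larger than an absolute constant) is right. The only place to be careful, as you flag, is making the constants uniform across the boundary between the two regimes and across both signs of $\rho_1-\rho_2$, but this is bookkeeping rather than a genuine obstacle.
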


For all distinct points $u,v\in\R^2$, let $e_{u,v}$ denote $\frac{v-u}{|v-u|}\in\Sphere^1$, the direction from $u$ to $v$.

\begin{theorem}\label{thm:pinned}
    If $A\subseteq\N$, $x,y\in\R^2$, $r\in\N$, and constants $\delta,\varepsilon\in\Q\cap(0,1)$ satisfy $\delta<\frac{\varepsilon}{6}\cdot\dim^{A,y}(e_{y,x})$,
    \begin{equation}\label{eq:x_pinned}
        K_r^{A,y}(x)\geq K_r^A(x)-\delta r,
    \end{equation}
    and $r$ is sufficiently large, then for all $s\in\N$,
    \begin{equation}\label{eq:pinned}
        K_{r,s}^{A,y}(|x-y|\mid |x-y|)\geq \frac{K_{r,s}^A(x\mid x)}{2}-\varepsilon r.
    \end{equation}
\end{theorem}

Before giving the proof, we note that the primary strength of this theorem is the weakness of the assumptions. Namely, the theorem does not require that $x,y$ or $e_{y,x}$ are highly complex. The only assumptions are that the dimension of $e_{y,x}$ is non-zero and that $x$ and $y$ are sufficiently independent. This makes Theorem \ref{thm:pinned} incomparable to the previous results on the algorithmic complexity of distances, such as Theorem 22 of \cite{Stull22c}, which all require strong assumptions on the randomness of $x,y$ and $e_{y,x}$.

\begin{proof}
    Let $A$, $x$, $y$, $r$, $s$, $\delta$, and $\varepsilon$ satisfy the hypotheses. We suppress the oracle $A$ from the notation below, but the argument is unchanged by working relative to $A$. There is a computable affine transformation $f:\R^2\to\R^2$ with computable Lipschitz constants such that $\lfloor f(x)\rfloor_7=(1,0)$ and $\lfloor f(y)\rfloor_7=(0,0)$. As this transformation is computable and bi-Lipschitz, applying $f$ can only change complexities at precision $r$ by $O(\log r)$~\cite{CasLut15,LutLut18}, so we assume without loss of generality that $\lfloor x\rfloor_7=(1,0)$ and $\lfloor y\rfloor_7=(0,0)$; this assumption will make it more convenient to apply Lemma~\ref{lem:wolff} as written, since $2^{-7}<0.01$.

    When $s\geq r-\varepsilon r$, the right-hand side of~\eqref{eq:pinned} is negative for all sufficiently large $r$. Hence, assume $s<r-\varepsilon r$, and assume toward a contradiction that~\eqref{eq:pinned} does not hold. We now construct sets $X$, $V$, and $\{N_v\mid v\in V\}$, to which we will apply Lemma~\ref{lem:comb}.
    
    \paragraph*{Construction}        
        For all $v,d\in\R^2$, denote by $v\rhd d$ the inequality
        \[K_{r,s}^{v}(|d-v|\mid |d-v|)<\frac{K_{r,s}(x\mid x)}{2}-\varepsilon r+4C_0\log r,\]
        where $C_0$ is the constant from~\eqref{eq:precision}. Intuitively, $v\rhd d$ means that $v$ is a suitable point to use in triangulating the location of the point $d$. Our goal will be to find two such points $u$ and $v$ such that $u$, $v$, and $d$ aren't too close to being collinear.

        Define the sets
        \begin{align*}
            X&=\left\{d\in\D_r^2\;\middle|\;
            \begin{array}{l}
                \lfloor d \rfloor_7=(1,0),\ K(d)\leq K_r(x)+C\log r,\\
                K(d\mid\lfloor d\rfloor_s)\leq K_{r,s}(x\mid x)+2C\log r
            \end{array}\right\}\\
            Y&=\left\{d\in\D_r^2\;\middle|\;
                \lfloor d\rfloor_7=(0,0),\ K(d)\leq K_r(y)+C\log r
            \right\},
        \end{align*}
        where $C$ is the constant from~\eqref{eq:dyadic}. Let $t=\lfloor\varepsilon r/3\rfloor$, noting that this is computable given $r$. Let $z=(r,s,K_r(x),K_{r,s}(x\mid x),K_r(y),K^y_t(e_{y,x}))$, noting that $K(z)=O(\log r)$ by~\eqref{eq:nat}, and let $C_3,C_4,C_5\in\N$ be constants to be defined later.

        Let $M$ be an oracle prefix Turing machine that, given $v$ as an oracle and a program for $z$ as an input, internally enumerates all $d\in X$ such that $v\rhd d$. For each $e\in\D_t$, whenever $M$ has internally enumerated exactly
        \begin{equation}\label{eq:crowded_pinned}
            \big\lfloor 2^{K_r(x)-K_t^y(e_{y,x})-\delta r-C_3\log r}\big\rfloor
        \end{equation}
        points $d$ such that $\lfloor e_{v,d}\rfloor_t=e$, the machine labels $e$ ``crowded for $v$'' and prints those points.

        For each $v\in Y$, let $F_v$ be the set of all directions that are crowded for $v$. Define the set
       \begin{equation}\label{eq:V1_pinned}
            V_1=\left\{d\in Y\;\middle\vert\; (\exists v\in\R^2)\left[\lfloor v\rfloor_r=d\text{ and }|F_v|>2^{K^y_t(e_{y,x})-C_4\log r}\right]\right\}.
        \end{equation}
        For each $d\in V_1$, let $v_d\in\R^2$ be a point satisfying $\lfloor v_d\rfloor_r=d$ and $|F_{v_d}|>2^{K_t^y(e_{y,x})-C_4\log r}$, and define the set $V=\{v_d\mid d\in V_1\}$.

        For each $v\in V$, define
        \[N_v=\left\{d\in X\mid M^v(\pi)\text{ prints }d\text{ and }K^v(\lfloor e_{v,d}\rfloor_t)\geq K^y_t(e_{y,x})-C_4\log r-2\right\},\]
        where $\pi$ is a witness to $K(z)$ and $C_4\in\N$ is the constant from~\eqref{eq:V1_pinned}.
        
    We will prove that the hypotheses of Lemma~\ref{lem:comb} are satisfied by the sets $X$ and $V$, the collection $\{N_v\mid v\in V\}$, the parameter $\alpha=2^{-\delta r-(C_3+C_4)\log r-3}$, and the collection $\{\sim_d\;\mid d\in X\}$ of reflexive binary relations defined for $u\neq v$ by $u\sim_d v$ if $|e_{u,d}-e_{v,d}|<2^{-t}$. We first state and prove three preliminary claims.
    \begin{claim}
        \begin{equation}\label{eq:X_pinned}
            2^{K_r(x)-O(\log r)}\leq |X|< 2^{K_r(x)+C\log r+1}.
        \end{equation}
    \end{claim}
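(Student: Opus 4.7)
The plan is to prove the two inequalities separately. The upper bound is immediate from Kraft's inequality: every $d\in X$ satisfies $K(d)\leq K_r(x)+C\log r$, so at most $2^{K_r(x)+C\log r}<2^{K_r(x)+C\log r+1}$ strings can qualify.

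For the lower bound I would build many elements of $X$ via a two-layer construction. Write $d=(\sigma,\tau)$, where $\sigma:=\lfloor d\rfloor_s\in\D_s^2$ is the outer layer and $\tau$ specifies the remaining $2(r-s)$ coordinate bits. Applying symmetry of information~\eqref{eq:soi} to $(\lfloor x\rfloor_r,\lfloor x\rfloor_s)$, and using that $\lfloor x\rfloor_s$ is computable from $\lfloor x\rfloor_r$, gives the key identity
\[K_r(x)=K_s(x)+K_{r,s}(x\mid x)+O(\log r).\]
On the outer layer I would count $\sigma$ with $\lfloor\sigma\rfloor_7=(1,0)$ and $K(\sigma)\leq K_s(x)+O(\log r)$; the standard lower bound on the number of strings of complexity at most a given $b$ yields at least $2^{K_s(x)-O(\log r)}$ such $\sigma$. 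On the inner layer, for each such $\sigma$ I would count extensions $\tau$ with $K(d\mid \sigma)\leq K_{r,s}(x\mid x)+O(\log r)$, producing at least $2^{K_{r,s}(x\mid x)-O(\log r)}$ per $\sigma$ by the same counting lemma relativized to $\sigma$. Subadditivity of prefix complexity then yields $K(d)\leq K(\sigma)+K(d\mid \sigma)+O(\log r)\leq K_r(x)+O(\log r)$, so for $C$ large enough all three defining conditions of $X$ are satisfied. Distinct pairs $(\sigma,\tau)$ give distinct $d$ (since $\sigma=\lfloor d\rfloor_s$), and multiplying the two counts produces $|X|\geq 2^{K_r(x)-O(\log r)}$.

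The step requiring some care is the \emph{saturated} regime, in which $K_s(x)$ or $K_{r,s}(x\mid x)$ nearly matches the maximum possible complexity---roughly $2(s-7)+O(\log r)$ for $\sigma$ (since $\sigma$ is constrained to lie near $(1,0)$) or $2(r-s)+O(\log r)$ for the extension. The low-complexity counting lemma is ordinarily phrased for complexity bounds strictly below the string length, so in these regimes one falls back on using all available encodings, and the universal length-vs-complexity inequalities ensure that this fallback count still dominates the stated target up to the usual $O(\log r)$ loss. Once this case is dispatched, only the bookkeeping around the constant $C$ and the $O(\log r)$ error terms remains.
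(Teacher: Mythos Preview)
Your upper bound matches the paper's: the first defining condition $K(d)\leq K_r(x)+C\log r$ alone limits $|X|$ to fewer than $2^{K_r(x)+C\log r+1}$ points.

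For the lower bound the paper takes a much shorter route than your two-layer construction. It verifies that $\lfloor x\rfloor_r\in X$ (both defining inequalities follow directly from~\eqref{eq:dyadic} and~\eqref{eq:dyadiccond}), observes that $X$ is computably enumerable given the tuple $z=(r,s,K_r(x),K_{r,s}(x\mid x),\ldots)$ with $K(z)=O(\log r)$, and then applies the enumeration bound~\eqref{eq:cardcomp}: since $\lfloor x\rfloor_r$ lies in a set that is c.e.\ from $O(\log r)$ bits of side information, $K_r(x)\leq K(\lfloor x\rfloor_r)+O(\log r)\leq \log|X|+O(\log r)$. No explicit construction, no layered counting, no saturated-regime case split.

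Your constructive route can in principle be pushed through, but as written it has a gap: the constant $C$ in the definition of $X$ is the \emph{fixed} constant from~\eqref{eq:dyadic}, not a parameter you may enlarge, so ``for $C$ large enough'' is not an available move. The $O(\log r)$ overheads you accumulate---from subadditivity, from the identity $K_r(x)=K_s(x)+K_{r,s}(x\mid x)\pm O(\log r)$, and from indexing into each layer---need not sum to at most $C\log r$, so the points you build are not obviously in $X$. The repair is to shrink each layer's count by an extra factor $r^{-a}$ with $a$ chosen large relative to all of those implicit constants, forcing the resulting complexity bounds strictly inside the thresholds $K_r(x)+C\log r$ and $K_{r,s}(x\mid x)+2C\log r$; the polynomial loss is absorbed into the target $2^{K_r(x)-O(\log r)}$, and the saturated-regime discussion then becomes unnecessary as well. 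But compared to the paper's two-line argument via~\eqref{eq:cardcomp}, this is a lot of bookkeeping for no additional gain.
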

    \begin{proof}
        Recall that $s\leq r$, and note that $X$ is uniformly computably enumerable given $z$. Furthermore, $\lfloor x\rfloor_r\in X$; it is immediate from~\eqref{eq:dyadic} that $K(\lfloor x\rfloor_r)\leq K_r(x)$, and
        \begin{align*}
            K(\lfloor x\rfloor_r\mid\lfloor\lfloor x\rfloor_r\rfloor_s)&= K(\lfloor x\rfloor_r\mid\lfloor x\rfloor_s)\\
            &<K(\lfloor x\rfloor_r\mid\mathcal{B}_{2^{-s}}(x))+C\log r\tag{by~\eqref{eq:dyadiccond}}\\
            &< K_{r,s}(x\mid x)+2C\log r.\tag{by~\eqref{eq:dyadic}}
        \end{align*}
        As $\lfloor x\rfloor_r\in X$, we have
        \begin{align*}
            K_r(x)&\leq K(\lfloor x\rfloor_r)+C\log r\tag{by~\eqref{eq:dyadic}}\\
            &\leq K(\lfloor x\rfloor_r\mid z)+O(\log r)\tag{by~\eqref{eq:soi}}\\
            &\leq \log|X|+O(\log r),\tag{by~\eqref{eq:cardcomp}}
        \end{align*}
        which give the lower bound on $|X|$ in~\eqref{eq:X_pinned}. For the upper bound, note that there are fewer than $2^{K_r(x)+C\log r+1}$ objects with complexity at most $K_r(x)+C\log r$.
    \end{proof}

    \begin{claim}
        \begin{equation}\label{eq:Fy}
            |F_y|\geq 2^{K^y_t(e_{y,x})-C_4\log r}.
        \end{equation}
    \end{claim}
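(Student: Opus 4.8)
The plan is to argue by contradiction: suppose $|F_y|<2^{K^y_t(e_{y,x})-C_4\log r}$, and derive that $e_{y,x}$ can be described too cheaply relative to $y$. Recall that $F_y$ is the set of directions $e\in\D_t$ that are crowded for $y$, i.e. directions for which the machine $M^y$, enumerating those $d\in X$ with $y\rhd d$, has found exactly $\lfloor 2^{K_r(x)-K_t^y(e_{y,x})-\delta r-C_3\log r}\rfloor$ points $d$ with $\lfloor e_{y,d}\rfloor_t=e$. So first I would establish that $\lfloor e_{y,x}\rfloor_t$ is itself crowded for $y$, i.e. $\lfloor e_{y,x}\rfloor_t\in F_y$. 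This is where the assumption~\eqref{eq:x_pinned}, $K_r^{y}(x)\geq K_r(x)-\delta r$, enters: I need to show there are at least $2^{K_r(x)-K_t^y(e_{y,x})-\delta r-C_3\log r}$ points $d\in X$ with $\lfloor e_{y,d}\rfloor_t=\lfloor e_{y,x}\rfloor_t$ and $y\rhd d$. The first claim, $|X|\geq 2^{K_r(x)-O(\log r)}$, gives us enough points in $X$; the constraint $\lfloor e_{y,d}\rfloor_t=\lfloor e_{y,x}\rfloor_t$ carves out a sub-sector which, by a counting/complexity argument using that $\lfloor\cdot\rfloor_7$ of $x$ and $y$ are fixed, still contains roughly $2^{K_r(x)-K_t^y(e_{y,x})-O(\log r)}$ elements of $X$ — here one uses that specifying an element of $X$ in this sector costs at most $\log|X|-K_t^y(e_{y,x})+O(\log r)$ bits relative to $y$, since the $t$-approximate direction is determined. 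Then the defining inequality $y\rhd d$, namely $K_{r,s}^{y}(|d-y|\mid|d-y|)<\frac{K_{r,s}(x\mid x)}{2}-\varepsilon r+4C_0\log r$, must hold for most such $d$: this is exactly the contradiction hypothesis~\eqref{eq:pinned} being false, transported from $x$ to nearby surrogate points $d$ via the precision-sensitivity bounds~\eqref{eq:precision}--\eqref{eq:condprecision} and the fact that $d$ shares $x$'s complexity bounds by membership in $X$. So the count of $d$ in the sector with $y\rhd d$ meets the crowding threshold, forcing $\lfloor e_{y,x}\rfloor_t$ to be labeled crowded.

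Next, assuming $|F_y|$ is small, I would describe $\lfloor e_{y,x}\rfloor_t$ relative to $y$ by: (i) running the same enumeration $M^y$, which is computable from $z$ (and $z$ has complexity $O(\log r)$), (ii) waiting for directions to be labeled crowded, and (iii) specifying which of the at most $|F_y|$ crowded directions is $\lfloor e_{y,x}\rfloor_t$ — this takes $\log|F_y|+2\log\log|F_y|+O(1)$ bits by~\eqref{eq:cardcomp}. Adding the $O(\log r)$ bits for $z$ and the machine description, this yields
\[
K^y_t(e_{y,x})\leq K^y(\lfloor e_{y,x}\rfloor_t\mid z)+O(\log r)\leq \log|F_y|+O(\log r)< K^y_t(e_{y,x})-C_4\log r+O(\log r),
\]
using~\eqref{eq:dyadic} to pass between $K^y_t(e_{y,x})$ and $K^y(\lfloor e_{y,x}\rfloor_t)$. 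Choosing the constant $C_4$ large enough that $C_4\log r$ dominates the $O(\log r)$ slack makes this a contradiction for all sufficiently large $r$. Hence $|F_y|\geq 2^{K^y_t(e_{y,x})-C_4\log r}$.

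The main obstacle I anticipate is step (i)--(iii) of the \emph{first} half: carefully verifying that $\lfloor e_{y,x}\rfloor_t\in F_y$, i.e. that the sector around direction $\lfloor e_{y,x}\rfloor_t$ genuinely contains enough elements $d\in X$ with $y\rhd d$ to trigger the crowding label. This requires simultaneously controlling (a) the number of $r$-dyadic points with the right $7$-truncation and the right $t$-approximate direction from $y$ — a geometric/counting estimate — (b) that enough of them satisfy the two complexity inequalities defining $X$, which follows from $K^y_r(x)\geq K_r(x)-\delta r$ via a symmetry-of-information argument showing most points in the relevant sector inherit $x$'s complexity profile up to the $\delta r$ loss, and (c) that $y\rhd d$ holds for enough of them, which is where the negation of~\eqref{eq:pinned} is actually used together with the bi-Lipschitz/precision lemmas to move the failed inequality from $x$ to each surrogate $d$. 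The bookkeeping of the $O(\log r)$ terms across these three estimates, and fixing $C_3$ (and then $C_4$) to absorb them, is the delicate part; the contradiction in the second half is then essentially a clean counting argument.
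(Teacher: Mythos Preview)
Your overall architecture matches the paper's: first show that $\lfloor e_{y,x}\rfloor_t$ is crowded for $y$, then use that $F_y$ is computably enumerable relative to $y$ (given $z$) and contains this element to lower-bound $|F_y|$. The second half is correct and is essentially what the paper does.

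The gap is in how you establish crowding. You decompose the count into (a) a geometric estimate on the sector, (b) intersecting with $X$, and (c) intersecting with $\{d:y\rhd d\}$, and propose to handle (c) by ``transporting'' the failed inequality~\eqref{eq:pinned} from $x$ to each such $d$ via the precision-sensitivity bounds. But the points $d$ in this sector are not close to $x$ at scale $2^{-r}$: they share only the $7$-bit truncation and the $t$-precision direction from $y$, so $|d-x|$ can be of order $1$. The bounds~\eqref{eq:precision}--\eqref{eq:condprecision} therefore give no control on $K^y_{r,s}(|d-y|\mid|d-y|)$ in terms of $K^y_{r,s}(|x-y|\mid|x-y|)$, and there is no reason $y\rhd d$ should hold for a generic $d$ in the sector. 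Since the crowding threshold and your lower bound on the sector of $X$ are of the \emph{same} order $2^{K_r(x)-K^y_t(e_{y,x})-\delta r-O(\log r)}$, you cannot afford any loss here.

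The paper sidesteps this decomposition entirely. It defines $T=\{d\in X:y\rhd d\text{ and }\lfloor e_{y,d}\rfloor_t=\lfloor e_{y,\lfloor x\rfloor_r}\rfloor_t\}$ with the $y\rhd d$ condition already built in, observes that $T$ is c.e.\ relative to $y$ given $\lfloor e_{y,\lfloor x\rfloor_r}\rfloor_t$ and $z$ (the predicate $K^y(\cdot)<\text{threshold}$ is c.e.), and verifies only that the \emph{single} point $\lfloor x\rfloor_r$ lies in $T$. That membership requires just $y\rhd\lfloor x\rfloor_r$, which follows directly from the assumed failure of~\eqref{eq:pinned} together with the $4C_0\log r$ slack built into the definition of $\rhd$. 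Then~\eqref{eq:cardcomp} gives $K^y(\lfloor x\rfloor_r\mid \lfloor e_{y,\lfloor x\rfloor_r}\rfloor_t,z)\leq\log|T|+O(\log r)$, and combining with~\eqref{eq:x_pinned} and symmetry of information yields $|T|\geq 2^{K_r(x)-K^y_t(e_{y,x})-\delta r-C_3\log r}$, exactly the crowding threshold. The key point you are missing: $y\rhd d$ need not be established for any $d$ other than $\lfloor x\rfloor_r$; the c.e.\ nature of the predicate lets it be absorbed into the enumeration rather than verified pointwise.
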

    \begin{proof}
        Since $|\lfloor x\rfloor_r-x|<2^{1/2-r}\leq 2^{1/2-s}$,
        we have
        \begin{align*}
            K_{r,s}^{y}(|\lfloor x\rfloor_r-y|\mid |\lfloor x\rfloor_r-y|)&\leq K_{r-2,s+2}^{y}(|\lfloor x\rfloor_r-y|\mid |\lfloor x\rfloor_r-y|)+4C_0\log r\tag{by~\eqref{eq:precision} and~\eqref{eq:condprecision}}\\
            &\leq K_{r,s}^{y}(|x-y|\mid |x-y|)+4C_0\log r\\
            &<\frac{K_{r,s}(x\mid x)}{2}-\varepsilon r+4C_0\log r,\tag{by assumption}
        \end{align*}
        so $y\rhd\lfloor x\rfloor_r$. Let
        \[T=\{d\in X\mid y\rhd d\text{ and }\lfloor e_{y,d}\rfloor_t=\lfloor e_{y,\lfloor x\rfloor_r}\rfloor_t\}.\]
        Clearly $\lfloor x\rfloor_r\in T$, and $T$ is uniformly computably enumerable relative to $y$, given $\lfloor e_{y,\lfloor x\rfloor_r}\rfloor_t$ and $z$. Therefore,
        \begin{align*}
            K_r(x)-K_t^y(e_{y,x})&\leq K^y_r(x)-K_t^y(e_{y,x})+\delta r+O(\log r)\tag{by~\eqref{eq:x_pinned}}\\
            &\leq K^y_r(x\mid \lfloor e_{y,\lfloor x\rfloor_r}\rfloor_t,z)+\delta r+O(\log r)\tag{by~\eqref{eq:precision},~\eqref{eq:soi},and~\eqref{eq:dyadic}}\\
            &\leq \log|T|+\delta r+O(\log r)\tag{by~\eqref{eq:cardcomp}},
        \end{align*}
        so there is some constant $C_3\in\N$ such that
        \[|T|\geq 2^{K_r(x)-K_t^y(e_{y,x})-\delta r-C_3\log r}.\]
        Choosing this $C_3$ as the constant in~\eqref{eq:crowded_pinned}, we have shown that
        \[\lfloor e_{y,\lfloor x\rfloor_r}\rfloor_t\in F_y.\]
        As $F_y$ is uniformly computably enumerable relative to $y$, given $z$, we have
        \begin{align*}
            K^y_t(e_{y,x})&\leq K^y(\lfloor e_{y,\lfloor x\rfloor_r}\rfloor_t)+O(\log r)\tag{by~\eqref{eq:precision} and~\eqref{eq:dyadic}}\\
            &\leq \log|F_y|+O(\log t),
        \end{align*}
        so there is some constant $C_4\in\N$ such that~\eqref{eq:Fy} holds.
    \end{proof}

    \begin{claim}
        \begin{equation}\label{eq:vcard}
            |V|\geq 2^{K_r(y)-O(\log r)}.
        \end{equation}
    \end{claim}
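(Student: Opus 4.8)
The plan is to prove~\eqref{eq:vcard} by mirroring the proof of Claim~\eqref{eq:X_pinned}, with $y$, $V_1$, and $\lfloor y\rfloor_r$ playing the roles there played by $x$, $X$, and $\lfloor x\rfloor_r$. First I would reduce to a lower bound on $|V_1|$: the assignment $d\mapsto v_d$ is injective, since distinct $d,d'\in V_1$ force $\lfloor v_d\rfloor_r=d\neq d'=\lfloor v_{d'}\rfloor_r$ and hence $v_d\neq v_{d'}$, so $|V|=|V_1|$. It therefore suffices to show $|V_1|\geq 2^{K_r(y)-O(\log r)}$; since $V_1\subseteq Y$ is finite and $K(z)=O(\log r)$, this will follow from~\eqref{eq:cardcomp}, \eqref{eq:soi}, and~\eqref{eq:dyadic} once we establish that (i) $\lfloor y\rfloor_r\in V_1$, and (ii) some subset $\hat V\subseteq V_1$ with $\lfloor y\rfloor_r\in\hat V$ is computably enumerable relative to $A$ given $z$.

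Part~(i) is the content of Claim~\eqref{eq:Fy}: taking $v=y$ gives $\lfloor v\rfloor_r=\lfloor y\rfloor_r$ and $|F_y|\geq 2^{K^y_t(e_{y,x})-C_4\log r}$, which --- taking $C_4$ in~\eqref{eq:V1_pinned} large enough for that inequality to be strict --- witnesses $\lfloor y\rfloor_r\in V_1$. For part~(ii), the point is that the existential clause ``$\exists v\in\R^2$'' in the definition of $V_1$ is certified by finitely much data. Let $\hat V$ be the set of $d\in Y$ admitting a certificate, by which I mean a dyadic approximation $\bar v\in\D_m^2$ at some precision $m\geq r$ with $\lfloor\bar v\rfloor_r=d$, a list of more than $2^{K^y_t(e_{y,x})-C_4\log r}$ distinct directions $e\in\D_t$, and, for each such $e$, a list of at least $\lfloor 2^{K_r(x)-K^y_t(e_{y,x})-\delta r-C_3\log r}\rfloor$ points $d'\in X$, each paired with a short program that certifies $\bar v\rhd d'$ while querying only the first $m$ bits of its oracle and that satisfies $\lfloor e_{\bar v,d'}\rfloor_t=e$. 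A certificate is checkable in finite time given $z$ and $A$ --- using that $X$ and $Y$ are computably enumerable given $z$, that $|d'-\bar v|$ and $e_{\bar v,d'}$ are computable from $\bar v$ and $d'$, and that the passage between the ball-conditioned complexity defining $\rhd$ and its dyadic-truncation counterpart costs only $O(\log r)$ by~\eqref{eq:dyadic} and~\eqref{eq:dyadiccond}. Possession of a certificate by $d$ forces $|F_{\bar v}|>2^{K^y_t(e_{y,x})-C_4\log r}$, so $\hat V\subseteq V_1$; and $\lfloor y\rfloor_r\in\hat V$, because for $v=y$ the finitely many witnessing programs of Claim~\eqref{eq:Fy}, together with the direction-roundings $\lfloor e_{y,d'}\rfloor_t$ they involve, remain valid under truncating $y$ to a high enough precision $m$, barring a measure-zero degenerate configuration that a generic choice of the affine normalization $f$ in Theorem~\ref{thm:pinned} rules out. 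Dovetailing over $d\in Y$ and over all candidate certificates shows $\hat V$ is computably enumerable relative to $A$ given $z$.

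With (i) and (ii) in hand, the bound follows from the chain
\[K_r(y)\leq K(\lfloor y\rfloor_r)+O(\log r)\leq K(\lfloor y\rfloor_r\mid z)+O(\log r)\leq \log|\hat V|+O(\log r)\leq \log|V_1|+O(\log r),\]
whose first step is~\eqref{eq:dyadic}, whose second uses~\eqref{eq:soi} together with $K(z)=O(\log r)$, whose third is~\eqref{eq:cardcomp} applied to $\hat V$ (absorbing the $2\log\log|\hat V|$ term via $\log|\hat V|=O(r)$), and whose fourth is $\hat V\subseteq V_1$; rearranging and recalling $|V|=|V_1|$ yields~\eqref{eq:vcard}. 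I expect the main obstacle to be part~(ii): one must make precise that membership in $V_1$, phrased with a quantifier over the reals, is recognized by a $z$-computable enumeration --- in particular that a short program relative to a real witness $v$ certifying $v\rhd d'$ persists after $v$ is replaced by a sufficiently precise dyadic truncation --- which is exactly where the boundary-stability considerations and the $O(\log r)$ slack from~\eqref{eq:dyadic} and~\eqref{eq:dyadiccond} enter.
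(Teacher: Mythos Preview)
Your outline matches the paper's proof: reduce to $|V_1|\geq 2^{K_r(y)-O(\log r)}$ via the injectivity giving $|V|=|V_1|$, note $\lfloor y\rfloor_r\in V_1$ from the preceding claim on $|F_y|$, establish that $V_1$ (or a subset containing $\lfloor y\rfloor_r$) is computably enumerable given $z$, and run the chain through~\eqref{eq:dyadic},~\eqref{eq:soi}, and~\eqref{eq:cardcomp}. The paper's version simply asserts that $V_1$ is c.e.\ given $z$ and moves on; you try to justify this via an explicit certified subset $\hat V$.

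Your concern about the existential quantifier over $\R^2$ is legitimate, but the fix you propose---appealing to ``a generic choice of the affine normalization $f$'' to rule out boundary degeneracies---does not work: in Theorem~\ref{thm:pinned}, $f$ is a fixed computable map determined by $x$ and $y$, not a parameter you may perturb. The correct argument applies the use principle directly to the machine $M$ rather than to the individual programs witnessing $\rhd$. Each time $M^y(\pi)$ labels some direction crowded, it does so after a finite computation querying only finitely many bits of the oracle $A_y$; hence for $m$ sufficiently large, $M^{\bar v}(\pi)$ with $\bar v=\lfloor y\rfloor_m$ reproduces all those labellings verbatim, so $|F_{\bar v}|>2^{K^y_t(e_{y,x})-C_4\log r}$ and $\bar v$ is a dyadic witness for $\lfloor y\rfloor_r\in V_1$. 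The same reasoning replaces any real witness for membership in $V_1$ by a dyadic one, so $V_1$ coincides with the set obtained by restricting the existential to dyadic rationals, and that set is c.e.\ given $z$ by dovetailing over $d\in Y$, dyadic $\bar v$, and simulation length. This makes your certificate construction and $\hat V$ unnecessary and sidesteps any comparison of $\lfloor e_{y,d'}\rfloor_t$ with $\lfloor e_{\bar v,d'}\rfloor_t$ as mathematical quantities: the set $F_v$ is \emph{defined} as what $M^v$ prints, and the use principle preserves that output under sufficiently fine truncation of the oracle.
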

    \begin{proof}
        First note that $Y$ is uniformly computably enumerable given $z$ and that~\eqref{eq:dyadic} implies $\lfloor y\rfloor_r\in Y$. It follows that $V_1$ is also uniformly computably enumerable given $z$, and by~\eqref{eq:Fy}, $\lfloor y\rfloor_r\in V_1$. Therefore,
        \begin{align*}
            K_r(y)&\leq K(\lfloor y\rfloor_r)+C\log r\tag{by~\eqref{eq:dyadic}}\\
            &\leq K(\lfloor y\rfloor_r\mid z)+K(z)+C\log r\tag{by~\eqref{eq:soi}}\\
            &=\log|V_1|+O(\log r).\tag{by~\eqref{eq:nat} and~\eqref{eq:cardcomp}}
        \end{align*}
        so $|V_1|\geq 2^{K_r(y)-O(\log r)}$. As $|V|=|V_1|$, this proves~\eqref{eq:vcard}.
    \end{proof}
    \begin{claim}
        The hypotheses of Lemma~\ref{lem:comb} are satisfied by the sets $X$ and $V$, the collection $\{N_v\mid v\in V\}$, the parameter $\alpha=2^{-\delta r-(C_3+C_4)\log r-3}$,
        and the collection $\{\sim_d\;\mid d\in X\}$ of reflexive binary relations defined for $u\neq v$ by $u\sim_d v$ if $|e_{u,d}-e_{v,d}|<2^{-t}$.
    \end{claim}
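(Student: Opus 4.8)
The claim requires verifying the two hypotheses of Lemma~\ref{lem:comb} for the displayed data. That $\alpha\in(0,1)$ is clear, since the exponent defining $\alpha$ is negative for large $r$. The remaining points are: (a) $|N_v|\ge\alpha|X|$ for every $v\in V$; and (b) for every $v\in V$ and every $d\in N_v$, $|\{u\in V\mid u\sim_d v\}|\le\tfrac{\alpha^2}{4}|V|$. One may further assume $K_{r,s}(x\mid x)>2\varepsilon r$, since otherwise~\eqref{eq:pinned} is immediate; then $K_r(x)$ is linear in $r$, which is all that is needed to make the crowding thresholds positive (and a parallel reduction ensures $K_r(y)$ is linear in $r$ as well, so that $|V|$ is large enough for Lemma~\ref{lem:comb} to be meaningful). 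Point (a) is a direct count; point (b) is the crux, and it is where the hypothesis $\delta<\tfrac{\varepsilon}{6}\dim^y(e_{y,x})$ is used.

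For (a), fix $v\in V$. By the construction of $V$ in~\eqref{eq:V1_pinned}, $|F_v|>2^{K_t^y(e_{y,x})-C_4\log r}$. Each crowded direction $e\in F_v$ causes $M^v$ to print exactly $\big\lfloor 2^{K_r(x)-K_t^y(e_{y,x})-\delta r-C_3\log r}\big\rfloor$ points $d'\in X$ with $\lfloor e_{v,d'}\rfloor_t=e$, and any such printed $d'$ lies in $N_v$ as soon as $K^v(\lfloor e_{v,d'}\rfloor_t)\ge K_t^y(e_{y,x})-C_4\log r-2$. Since fewer than $2^{K_t^y(e_{y,x})-C_4\log r-2}$ strings have $K^v$-complexity below that threshold, more than three quarters of the directions in $F_v$ meet it; the blocks of points printed for distinct directions are disjoint, and every point printed for such a ``good'' direction lies in $N_v$. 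Multiplying the count of good directions by the per-direction count gives $|N_v|\ge 2^{K_r(x)-\delta r-(C_3+C_4)\log r-O(1)}$, and comparing this with the upper bound $|X|<2^{K_r(x)+C\log r+1}$ from~\eqref{eq:X_pinned} yields $|N_v|\ge\alpha|X|$ once $C_3$ and $C_4$ are taken large enough to swallow the implied constants and $C$; this is consistent with the earlier claims, which only pin down lower bounds on $C_3$ and $C_4$.

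For (b), fix $v\in V$ and $d\in N_v$. If $u\in V$ with $u\ne v$ and $u\sim_d v$, then $|e_{u,d}-e_{v,d}|<2^{-t}$, so $\lfloor e_{u,d}\rfloor_t$ takes one of $O(1)$ values; since $\lfloor d\rfloor_7=(1,0)$ while every element of $V$ lies in the $2^{-7}$-ball about the origin (as $V=\{v_{d''}\mid d''\in V_1\}$ and $V_1\subseteq Y$), the points $u$ in question are confined to $O(1)$ strips, each of width $O(2^{-t})$ and running from near the origin toward $d$. It therefore suffices to bound $|V\cap\mathcal S|$ for one such strip $\mathcal S$, and any bound beating the density $\tfrac{\alpha^2}{4}\approx 2^{-2\delta r}$ relative to $|V|$ will do. The plan is to obtain $|V\cap\mathcal S|\le 2^{K_r(y)-K_t^y(e_{y,x})+O(\log r)}$ by double-counting pairs $(u,d^\ast)\in(V\cap\mathcal S)\times X$ with $u\rhd d^\ast$: on one side, for each $u=v_{d''}\in V\cap\mathcal S$, membership of $d''$ in $V_1$ forces $u$ to witness at least $2^{K_r(x)-\delta r-(C_3+C_4)\log r-1}$ points $d^\ast\in X$ with $u\rhd d^\ast$, spread over more than $2^{K_t^y(e_{y,x})-C_4\log r}$ directions $\lfloor e_{u,d^\ast}\rfloor_t$; on the other side, the thinness of $\mathcal S$ together with the fact that every $d^\ast\in X$ lies at distance near $1$ from every $u\in\mathcal S$ is used to limit how many distinct $u\in\mathcal S$ can simultaneously satisfy $u\rhd d^\ast$, after which comparison with $|X|<2^{K_r(x)+C\log r+1}$ bounds $|V\cap\mathcal S|$ as claimed. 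Granting this, (b) follows: by~\eqref{eq:vcard}, $\tfrac{\alpha^2}{4}|V|\ge 2^{K_r(y)-2\delta r-O(\log r)}$, while $\delta<\tfrac{\varepsilon}{6}\dim^y(e_{y,x})$ together with $t=\lfloor\varepsilon r/3\rfloor$ and $\dim^y(e_{y,x})=\liminf_{r\to\infty}K_r^y(e_{y,x})/r$ gives $2\delta r<\dim^y(e_{y,x})\cdot t\le K_t^y(e_{y,x})+o(r)$ for all sufficiently large $r$, so that $2^{K_r(y)-K_t^y(e_{y,x})+O(\log r)}\le 2^{K_r(y)-2\delta r-\Omega(r)}\le\tfrac{\alpha^2}{4}|V|$. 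This is exactly where positivity of $\dim^y(e_{y,x})$ enters.

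I expect the main obstacle to be the per-strip count in (b): turning the geometry of a width-$2^{-t}$ strip that meets the set $X$ of points at distance near $1$ into a clean upper bound on $|V\cap\mathcal S|$, tracking carefully how $|d^\ast-u|$ and the direction $e_{u,d^\ast}$ vary as $u$ ranges over $\mathcal S$, and arranging the double count---plausibly via a Cauchy--Schwarz step in the spirit of Lemma~\ref{lem:comb}---so that every loss stays inside the $o(r)$ margin afforded by $\delta<\tfrac{\varepsilon}{6}\dim^y(e_{y,x})$. The as-yet-unused constant $C_5$ from the construction should be introduced at this step, and selecting $C_3$, $C_4$, $C_5$ consistently across (a) and (b) is the bookkeeping that closes the argument.
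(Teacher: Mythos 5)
Your part (a) matches the paper: of the more than $2^{K_t^y(e_{y,x})-C_4\log r}$ crowded directions in $F_v$, at most a quarter can have $K^v$-complexity below $K_t^y(e_{y,x})-C_4\log r-2$, so multiplying the surviving directions by the per-direction count $\big\lfloor 2^{K_r(x)-K_t^y(e_{y,x})-\delta r-C_3\log r}\big\rfloor$ and comparing to the upper bound in~\eqref{eq:X_pinned} gives $|N_v|>\alpha|X|$.

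Your part (b) diverges from the paper and contains a genuine gap. The paper does not double-count pairs $(u,d^*)$ at all. Instead it exploits the $K^v$-complexity threshold that was built into the definition of $N_v$ for exactly this purpose: for any $C_5\in\N$, a trivial count shows that at most $2^{K_t^y(e_{y,x})-C_5\log r+O(1)}$ directions $e\in\D_t$ can satisfy $|\{u\in V\mid |e_{v,u}-e|<2^{-t}\}|\geq 2^{-K_t^y(e_{y,x})+C_5\log r}|V|$; these ``popular'' directions form a set computably enumerable given $v$ and $z$, so each has $K^v(e)\leq K_t^y(e_{y,x})-C_5\log r+O(\log r)$ by~\eqref{eq:cardcomp}, and for $C_5$ large enough this falls below the $N_v$ threshold $K_t^y(e_{y,x})-C_4\log r-2$, whence $\lfloor e_{v,d}\rfloor_t$ cannot be popular for any $d\in N_v$. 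That is where the as-yet-unused constant $C_5$ enters, and it is the paper's only use of the hypothesis $d\in N_v$ when establishing~\eqref{eq:fewsim}; the rest is the geometric reduction tying $\{u:u\sim_d v\}$ to $\{u:|e_{v,u}-e_{v,d}|<2^{-t}\}$ and the rate estimate $K_t^y(e_{y,x})\geq t\dim^y(e_{y,x})-o(t)$ together with $t=\lfloor\varepsilon r/3\rfloor$ and $\delta<\tfrac{\varepsilon}{6}\dim^y(e_{y,x})$. Your proposed double count never invokes the $K^v$-threshold — which is the only leverage that $d\in N_v$ gives you, and without it nothing prevents $\lfloor e_{v,d}\rfloor_t$ from being a popular direction — and the per-$d^*$ bound on $|\{u\in\mathcal S\mid u\rhd d^*\}|$ that your plan requires is not clearly obtainable from geometry alone, since $u\rhd d^*$ is a Kolmogorov complexity condition on the distance $|d^*-u|$ rather than an incidence condition (circles about $d^*$ cut across the strip $\mathcal S$, and the complexity of $|d^*-u|$ relative to $u$ is not geometrically constrained as $u$ ranges over $\mathcal S$). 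You correctly flag this as the main obstacle; it is a missing idea, not bookkeeping that can be closed by tuning $C_3,C_4,C_5$.
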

    \begin{proof}
        Fix any $v\in V$ and $d\in N_v$. By~\eqref{eq:V1_pinned}, there are at least $2^{K_t^y(e_{y,x})-C_4\log r}$ directions $e\in\D_t$ that are crowded for $v$. As there are fewer than $2^{K_t^y(e_{y,x})-C_4\log r-1}$ objects with complexity less than $K_t^y(e_{y,x})-C_4\log r-2$ relative to $v$, there are more than
        $2^{K_t^y(e_{y,x})-C_4\log r-1}$ directions $e\in\D_t$ that are crowded for $v$ and satisfy $K^v(e)\geq K_t^y(e_{y,x})-C_4\log r-2$. For each of these directions, $M^v(\pi)$ prints exactly $\big\lfloor 2^{K_r(x)-K_t^y(e_{y,x})-\delta r-C_3\log r}\big\rfloor$ points $d$, so
        \begin{align*}
            |N_v|&>2^{K_t^y(e_{y,x})-C_4\log r-1} \big\lfloor 2^{K_r(x)-K_t^y(e_{y,x})-\delta r-C_3\log r}\big\rfloor\\
            &\geq 2^{K_r(x)-\delta r-(C_3+C_4)\log r-2}\\
            &> \alpha|X|.\tag{by~\eqref{eq:X_pinned}}
        \end{align*}
        
        For~\eqref{eq:fewsim}, observe that for every $C_5\in\N$, there can be only $2^{K_t^y(e_{y,x})-C_5\log r+O(1)}$ directions $e\in \D_t$ such that
        \[|\{u\in V \mid |e_{v,u}-e|<2^{-t}\}|\geq 2^{-K_t^y(e_{y,x})+C_5\log r}|V|.\]
        The set of all such directions $e$ is uniformly computably enumerable relative to $v$, given $z$, so if $\lfloor e_{v,d}\rfloor_t$ were such a direction, then we would have
        \begin{align*}
            K^v(\lfloor e_{v,d}\rfloor_t)&\leq K^v(\lfloor e_{v,d}\rfloor_t\mid z)+O(\log r)\tag{by~\eqref{eq:soi}}\\
            &\leq K_t^y(e_{y,x})-C_5\log r+O(1)+O(\log r)\tag{by~\eqref{eq:cardcomp}}\\
            &<K_t^y(e_{y,x})-C_4\log r-2,
        \end{align*}
        for every sufficiently large constant $C_5\in\N$. Fixing such a $C_5$, this contradicts the membership of $d$ in $N_v$. Furthermore, as $V\subseteq[0,2^{-7})^2$ and $\lfloor d\rfloor_7=(1,0)$, if $|e_{v,u}-e_{v,d}|<2^{-t}$ then we also have $|e_{u,d}-e_{v,d}|<2^{-t}$. Therefore,
        \begin{align*}
            |\{u\in V: u\sim_d v\}|&\leq |\{u\in V: |e_{v,u}-e_{v,d}|<2^{-t}\}|\\
            &<2^{-K_t^y(e_{y,x})+C_5\log r}|V|\\
            &\leq 2^{-t\dim^y(e_{y,x})+o(t)+C_5\log r}|V|\\
            &=2^{-\varepsilon r\dim^y(e_{y,x})/3+o(r)}\tag{$t=\lfloor\varepsilon r/3\rfloor$}\\
            &<2^{-2(\delta r+(C_3+C_4)\log r+3)-2}|V|\\
            &=\frac{\alpha^2}{4}|V|
        \end{align*}
        holds when $r$ is sufficiently large, as $\delta<\varepsilon\dim^y(e_{y,x})/6$. As $v\in V$ and $d\in N_v$ were arbitrary, we conclude that~\eqref{eq:fewsim} holds.
    \end{proof}

    Hence, we can apply Lemma~\ref{lem:comb} to let $u,v\in V$ be such that
    \[|\{d\in X\mid d\in N_u\cap N_v\text{ and }|e_{u,d}-e_{v,d}|\geq 2^{-t}\}|>2^{K_r(x)-2\delta r-O(\log r)}.\]
    Since this set is large, it contains high-complexity points relative to every oracle; in particular, let $d\in N_u\cap N_v$ such that $|e_{u,d}-e_{v,d}|\geq 2^{-t}$ and
    \begin{equation}\label{eq:Kuvd_pinned}
        K^{u,v}(d)\geq K_r(x)-2\delta r-O(\log r).
    \end{equation}
    
    Let $\pi_u$ be a witness to $K^u(\lfloor |d-u|\rfloor_r \mid \lfloor |d-u|\rfloor_s)$ and $\pi_v$ be a witness to $K^v(\lfloor |d-v|\rfloor_r\mid \lfloor |d-v|\rfloor_s)$, and define the sets $P_u=\{U^u(\pi_u,d_u)\mid d_u\in\D_s\}$ and $P_v=\{U^v(\pi_v,d_v)\mid d_v\in\D_s\}$. Intuitively, each of these sets contains, within each $s$-dyadic interval, at most one candidate value for an approximation of a distance from $d$. In particular, $\lfloor |d-u|\rfloor_r\in P_u$ and $\lfloor |d-v|\rfloor_r\in P_v$. Let $D_{u,v}$ be the set of all $r$-dyadic points that share these properties; that is,
    \[D_{u,v}=\left\{w\in\D_r^2\mid \lfloor|w-u|\rfloor_r\in P_u\text{ and }\lfloor |w-v|\rfloor_r\in P_v\right\}.\]
    For any two points $w,w'\in D_{u,v}$ such that $|w-w'|<2^{1-s}$, we have
    \[|\lfloor|w-u|\rfloor_r -\lfloor |w'-u|\rfloor_r|\leq \big||w-u|-|w'-u|\big|+2^{1-r}< 2^{2-s},\]
    and similarly $|\lfloor|w-v|\rfloor_r -\lfloor |w'-v|\rfloor_r|<2^{2-s}$. As each $s$-dyadic interval contains at most one element of $P_u$ and one element of $P_v$, this implies that the set
    \[P_{u,v}=\{(\lfloor |w-u|\rfloor_r,\lfloor |w-v|\rfloor_r)\mid w\in D_{u,v}\text{ and }\lfloor w\rfloor_s=\lfloor d\rfloor_s\}\]
    has cardinality $O(1)$.
    
    Now consider the set
    \[\hat{D}_{u,v}=\{w\in\D_r^2\mid \lfloor w-u\rfloor_r=\lfloor d-u\rfloor_r\text{ and }\lfloor w-v\rfloor_r=\lfloor d-v\rfloor_r\}.\]
    This set is contained in the intersection of $O(2^{-r})$-neighborhoods of the circles with centers $u$ and $v$ and radii $\rho_u:=|d-u|$ and $\rho_v:=|d-v|$, respectively. As $|e_{u,d}-e_{v,d}|\geq 2^{-t}$, for all sufficiently large $r$ we have
    \[\max\{|u-v|+|\rho_u-\rho_v|,O(2^{-r})\}\geq 2^{-t-O(1)}\]
    and, by a straightforward trigonometric argument (also using $|u-v|<2^{-8}$ and $\rho_u,\rho_v> 1-2^{-7}$),
    \[\max\{||u-v|-|\rho_u-\rho_v||,O(2^{-r})\}\geq 2^{-3t-O(1)}.\]
    Therefore, by Lemma~\ref{lem:wolff}, $\hat{D}_{u,v}$ is covered by $O(1)$ sets, each of diameter
    \[\frac{O(2^{-r})}{\sqrt{2^{-4t}}}=O(2^{2t-r}).\]
    
    It follows that some point in $\hat{D}_{u,v}$ is a precision-$(r-2t-O(1))$ approximation of $d$ that can be computed, relative to $u$ and $v$, given $\pi_u$, $\pi_v$, a witness to $K(\lfloor d\rfloor_s)$, $O(1)$ bits to specify $(|d-u|\rfloor_r,\lfloor |d-v|\rfloor_r)$ within the set $P_{u,v}$, $O(1)$ bits to specify a specific cover element, and $O(1)$ bits for the description of the Turing machine that performs this computation. That is,
    \begin{equation}\label{eq:compute_d_pinned}
        K^{u,v}_{r-2t-O(1)}(d)\leq |\pi_u|+|\pi_v|+K(\lfloor d\rfloor_s)+O(1).
    \end{equation}
    Therefore,
    \begin{align*}
        K_{r,s}(x\mid x)&=K(\lfloor x\rfloor_r\mid \lfloor x\rfloor_s)+O(\log r)\tag{by~\eqref{eq:dyadic} and~\eqref{eq:dyadiccond}}\\
        &= K(\lfloor x\rfloor_r)-K(\lfloor x\rfloor_s)+O(\log r)\tag{by~\eqref{eq:soi}}\\
        &=K_r(x)-K(\lfloor x\rfloor_s)+O(\log r)\tag{by~\eqref{eq:dyadic}}\\
        &\leq K^{u,v}(d)-K_s(x)+2\delta r+O(\log r)\tag{by~\eqref{eq:Kuvd_pinned}}\\
        &= K_r^{u,v}(d)-K(\lfloor d\rfloor_s)+2\delta r+O(\log r)\tag{by~\eqref{eq:dyadic}}\\
        &\leq K^{u,v}_{r-2t-O(1)}(d)+4t-K(\lfloor d\rfloor_s)+2\delta r+O(\log r)\tag{by~\eqref{eq:precision}}\\
        &\leq |\pi_u|+|\pi_v|+4t+2\delta r+O(\log r)\tag{by~\eqref{eq:compute_d_pinned}}\\
        &= K_{r,s}^u(|d-u|\mid |d-u|)+K_{r,s}^v(|d-v|\mid |d-v|)+4t+2\delta r+O(\log r)\tag{by~\eqref{eq:dyadic} and~\eqref{eq:dyadiccond}}\\
        &\leq K_{r,s}(x\mid x)-2\varepsilon r+4t+2\delta r+O(\log r)\tag{$d\in N_u\cap N_v$}\\
        &<K_{r,s}(x\mid x)-2\varepsilon r/3+2\delta r+O(\log r).\tag{$t\leq \varepsilon r/3$}
    \end{align*}
    As $\delta<\varepsilon\dim^y(e_{y,x})/6\leq\varepsilon/6$, this implies $\varepsilon r/3< O(\log r)$, a contradiction for sufficiently large $r$. We conclude that~\eqref{eq:pinned} holds.
\end{proof}

\section{Kolmogorov Complexity Bound for Orthogonal Projections}

\begin{theorem}\label{thm:orth}
    If $A\subseteq\N$, $x\in\R^2$, $e\in\Sphere^1$, $r\in\N$, and constants $\delta,\varepsilon\in\Q\cap(0,1)$ satisfy $\delta<\varepsilon\dim^A(e)/4$,
    \begin{equation}\label{eq:x}
        K_r^{A,e}(x)\geq K^A_r(x)-\delta r,
    \end{equation}
    and $r$ is sufficiently large, then for all $s\in\N$,
    \begin{equation}\label{eq:orth}
        K_{r,s}^{A,e}(p_e x\mid p_e x)\geq \frac{K^A_{r,s}(x\mid x)}{2}-\varepsilon r.
    \end{equation}
\end{theorem}
\begin{proof}
    Let $A$, $x$, $e$, $r$, $s$, $\delta$, and $\varepsilon$ satisfy the hypotheses. We suppress the oracle $A$ from the notation below, but the argument is unchanged by working relative to $A$. When $s\geq r-\varepsilon r$, the right-hand side of~\eqref{eq:orth} is negative for all sufficiently large $r$. Hence, assume $s<r-\varepsilon r$, and assume toward a contradiction that~\eqref{eq:orth} does not hold. We now construct sets $X$ and $V$ and a collection $\{N_v\mid v\in V\}$, to which we will apply Lemma~\ref{lem:comb}.
    
    \paragraph*{Construction}
    Let $t=\lfloor\varepsilon r/2\rfloor$, noting that this is computable given $r$, and define the sets
    \begin{align*}
        X&=\{d\in\D_r^2\mid K(d)\leq K_r(x)+C\log r\ \text{and}\ K(d\mid\lfloor d\rfloor_s)\leq K_{r,s}(x\mid x)+2C\log r\}\\
        S&=\{d\in\D_r\mid K(d)\leq K_r(e)+C\log r\ \text{and}\ K_t(d)\leq K_t(e)+2C\log t\},
    \end{align*}
    where $C$ is the constant from~\eqref{eq:dyadic}.
    
    For every direction $v\in\Sphere^1$, define the set
    \[N_v=\left\{d\in X\;\middle\vert\;K_{r,s}^v(p_vd\mid p_vd)<\frac{K_{r,s}(x\mid x)}{2}-\varepsilon r+4C_0\log r\right\},\]
    where $C_0$ is the constant from~\eqref{eq:precision}.

    Let $z=(r,s,K_r(x),K_{r,s}(x\mid x),K_r(e),K_t(e))$, noting that $K(z)=O(\log r)$ by~\eqref{eq:nat}, and let $C_1,C_2\in\N$ be constants to be defined later. Define the set
    \begin{equation}\label{eq:V1}
        V_1=\left\{d\in S\;\middle\vert\; (\exists v\in\Sphere^1)\left[\lfloor v\rfloor_r=d\text{ and }|N_v|>2^{K_r(x)-\delta r-C_1\log r}\right]\right\}.
    \end{equation}

    Let $M$ be a prefix Turing machine that, given any input $\pi\in\{0,1\}^*$ such that $U(\pi)\in\N^6$, does the following. Working under the assumption that $U(\pi)=z$, $M$ internally enumerates $V_1$. For each $d\in\D_t$, whenever $M$ has internally enumerated exactly $\lfloor 2^{K_r(e)-K_t(e)-C_2\log r}\rfloor$ directions $d'\in V_1$ such that $\lfloor d'\rfloor_t=d$, the machine labels $d$ ``crowded'' and prints these $\lfloor 2^{K_r(e)-K_t(e)-C_2\log r}\rfloor$ directions $d'$.

    Let $V_2$ be the set of all directions printed by $M(\pi)$, where $\pi$ is a witness to $K(z)$. As $V_2\subseteq V_1$, for each $d\in V_2$ there is some $v_d\in\Sphere^1$ satisfying $\lfloor v_d\rfloor_r=d$ and $|N_{v_d}|>2^{K_r(x)-\delta r-C_1\log r}$; let $V=\{v_d\mid d\in V_2\}$.

    We will show that the sets $X$ and $V$, the collection $\{N_v\mid v\in V\}$, $\alpha=2^{-\delta r-(C+C_2)\log r-1}$, and the binary relation $|u-v|<2^{-t}$ satisfy the hypotheses of Lemma~\ref{lem:comb}. As in the proof of Theorem~\ref{thm:pinned}, we first state and prove three preliminary claims.

    \begin{claim}
        \begin{equation}\label{eq:X}
            2^{K_r(x)-O(\log r)}\leq |X|< 2^{K_r(x)+C\log r+1}.
        \end{equation}
    \end{claim}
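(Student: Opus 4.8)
The plan is to follow verbatim the structure of the corresponding claim already established in the proof of Theorem~\ref{thm:pinned}: the set $X$ here is cut out by the same two complexity constraints, with only the normalization $\lfloor d\rfloor_7=(1,0)$ dropped (there is no affine normalization in the projection argument), so the same bookkeeping applies.

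First I would note that $X$ is computably enumerable given $z$. Both defining conditions, $K(d)\le K_r(x)+C\log r$ and $K(d\mid\lfloor d\rfloor_s)\le K_{r,s}(x\mid x)+2C\log r$, are $\Sigma^0_1$, and the thresholds $K_r(x)$ and $K_{r,s}(x\mid x)$ are among the coordinates of $z$; hence a machine given $z$ can dovetail the approximations to $K$ and enumerate precisely those $d\in\D_r^2$ meeting both bounds. I would then verify that $\lfloor x\rfloor_r\in X$: the inequality $K(\lfloor x\rfloor_r)\le K_r(x)$ is immediate from~\eqref{eq:dyadic}, and, using $\lfloor\lfloor x\rfloor_r\rfloor_s=\lfloor x\rfloor_s$ (valid since $s<r$), followed by~\eqref{eq:dyadiccond} and~\eqref{eq:dyadic},
\[
K(\lfloor x\rfloor_r\mid\lfloor\lfloor x\rfloor_r\rfloor_s)=K(\lfloor x\rfloor_r\mid\lfloor x\rfloor_s)<K(\lfloor x\rfloor_r\mid\mathcal{B}_{2^{-s}}(x))+C\log r<K_{r,s}(x\mid x)+2C\log r,
\]
so $\lfloor x\rfloor_r$ satisfies the second constraint as well.

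For the lower bound I would invoke~\eqref{eq:cardcomp} on the set $X$, which is computably enumerable given $z$ and contains $\lfloor x\rfloor_r$, obtaining $K(\lfloor x\rfloor_r\mid z)\le\log|X|+2\log\log|X|+O(1)$. Since $K(z)=O(\log r)$ by~\eqref{eq:nat}, symmetry of information~\eqref{eq:soi} gives $K(\lfloor x\rfloor_r)\le K(\lfloor x\rfloor_r\mid z)+O(\log r)$, and then~\eqref{eq:dyadic} upgrades the left side to $K_r(x)$, yielding $K_r(x)\le\log|X|+O(\log r)$, i.e.\ $|X|\ge 2^{K_r(x)-O(\log r)}$. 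The upper bound is a counting bound: every $d\in X$ has $K(d)\le K_r(x)+C\log r$, and there are fewer than $2^{K_r(x)+C\log r+1}$ objects of complexity at most $K_r(x)+C\log r$, so $|X|<2^{K_r(x)+C\log r+1}$.

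I do not expect any genuine obstacle, since each step transcribes an argument already carried out for Theorem~\ref{thm:pinned}; the only point deserving a moment's attention is confirming that all error terms collapse to $O(\log r)$, which holds because $K(z)=O(\log r)$ and because $\log|X|\le K_r(x)+O(\log r)=O(r)$ makes the $\log\log|X|$ term negligible.
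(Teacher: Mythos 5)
Your proposal is correct and follows the paper's own argument step for step: establish that $X$ is computably enumerable given $z$, verify $\lfloor x\rfloor_r\in X$ using~\eqref{eq:dyadic} and~\eqref{eq:dyadiccond}, then get the lower bound via~\eqref{eq:cardcomp} and symmetry of information, and the upper bound by counting objects of bounded complexity. (In fact your counting step uses the threshold $K_r(x)+C\log r$, which is what the first defining condition of $X$ actually gives and matches the exponent in~\eqref{eq:X}; the paper's text momentarily writes $2C\log r$ there, a harmless slip.)
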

    \begin{proof}
        First note that $X$ is uniformly computably enumerable given $z$. Furthermore, $\lfloor x\rfloor_r\in X$; it is immediate from~\eqref{eq:dyadic} that $K(\lfloor x\rfloor_r)\leq K_r(x)$, and we have
        \begin{align*}
            K(\lfloor x\rfloor_r\mid\lfloor\lfloor x\rfloor_r\rfloor_s)&= K(\lfloor x\rfloor_r\mid\lfloor x\rfloor_s)\\
            &<K(\lfloor x\rfloor_r\mid\mathcal{B}_{2^{-s}}(x))+C\log r\tag{by~\eqref{eq:dyadiccond}}\\
            &< K_{r,s}(x\mid x)+2C\log r.\tag{by~\eqref{eq:dyadic}}
        \end{align*}
        As $\lfloor x\rfloor_r\in X$,
        \begin{align*}
            K_r(x)&\leq K(\lfloor x\rfloor_r)+C\log r\tag{by~\eqref{eq:dyadic}}\\
            &\leq K(\lfloor x\rfloor_r\mid z)+O(\log r)\tag{by~\eqref{eq:soi}}\\
            &\leq \log|X|+O(\log r).\tag{by~\eqref{eq:cardcomp}}
        \end{align*}
        Also, there are fewer than $2^{K_r(x)+2C\log r+1}$ objects with complexity at most $K_r(x)+2C\log r$, so we can bound $|X|$ on both sides as in~\eqref{eq:X}.
    \end{proof}
    \begin{claim}
        There is some constant $C_1\in\N$ such that
        \begin{equation}\label{eq:Ne_card}
            |N_e|>2^{K_r(x)-\delta r- C_1\log r}.
        \end{equation}
    \end{claim}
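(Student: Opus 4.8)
The plan is to show that $N_e$ is computably enumerable relative to $e$ once $z$ is known, and that it contains a single point of near‑maximal complexity relative to $e$, namely $\lfloor x\rfloor_r$; the bound~\eqref{eq:Ne_card} is then immediate from~\eqref{eq:cardcomp}. The point worth stressing is that \emph{no} estimate on the number of points of $X$ lying in a fiber of $p_e$ is needed: one high‑complexity element together with computable enumerability suffices. (As the ambient oracle $A$ is suppressed, ``relative to $e$'' means relative to the join of $A$ with an oracle for $e$; recall $K(z)=O(\log r)$ by~\eqref{eq:nat}, and $N_e$ denotes the set $N_v$ from the construction with $v=e$.)

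First I would verify that $\lfloor x\rfloor_r\in N_e$. That $\lfloor x\rfloor_r\in X$ was shown in the course of proving~\eqref{eq:X}. For the defining inequality of $N_e$, since orthogonal projection is $1$‑Lipschitz we have $|p_e\lfloor x\rfloor_r-p_ex|<2^{1/2-r}$, and combining this with~\eqref{eq:precision},~\eqref{eq:condprecision}, and~\eqref{eq:dyadic} — exactly the precision‑shuffling used to establish suitability of $\lfloor x\rfloor_r$ in the proof of~\eqref{eq:Fy} — yields
\[
    K_{r,s}^e(p_e\lfloor x\rfloor_r\mid p_e\lfloor x\rfloor_r)\le K_{r,s}^e(p_ex\mid p_ex)+4C_0\log r
\]
for $r$ large. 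Since we have assumed toward a contradiction that~\eqref{eq:orth} fails, $K_{r,s}^e(p_ex\mid p_ex)<\frac{K_{r,s}(x\mid x)}{2}-\varepsilon r$, so the right‑hand side is strictly below the threshold defining $N_e$; hence $\lfloor x\rfloor_r\in N_e$.

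Next I would check that $N_e$ is computably enumerable relative to $e$ given $z$. The set $X$ is computably enumerable given $z$ (also shown in the proof of~\eqref{eq:X}), and for each $d\in X$ the condition $K_{r,s}^e(p_ed\mid p_ed)<\frac{K_{r,s}(x\mid x)}{2}-\varepsilon r+4C_0\log r$ is $\Sigma^0_1$ relative to $e$ once $z$ is known: prefix complexity is upper semicomputable, so after passing (via~\eqref{eq:dyadic} and~\eqref{eq:dyadiccond}, at the cost of shrinking the threshold by $O(\log r)$, which the $4C_0\log r$ term absorbs) to the single‑string form $K^e(\lfloor p_ed\rfloor_r\mid\lfloor p_ed\rfloor_s)$, an upper bound below the threshold is found after finitely many stages of dovetailing. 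Thus, since $\lfloor x\rfloor_r\in N_e$, \eqref{eq:cardcomp} gives $K^e(\lfloor x\rfloor_r\mid z)\le\log|N_e|+2\log\log|N_e|+O(1)$; as $N_e\subseteq X$ and $|X|=2^{O(r)}$ by~\eqref{eq:X}, the $\log\log$ term is $O(\log r)$, and then~\eqref{eq:soi} (with $K(z)=O(\log r)$),~\eqref{eq:dyadic}, and the hypothesis~\eqref{eq:x} give
\[
    \log|N_e|\ge K^e(\lfloor x\rfloor_r\mid z)-O(\log r)\ge K^e(\lfloor x\rfloor_r)-O(\log r)\ge K_r^e(x)-O(\log r)\ge K_r(x)-\delta r-O(\log r).
\]
Choosing $C_1$ to exceed the implied constant yields~\eqref{eq:Ne_card}; this also fixes the value of $C_1$ used in~\eqref{eq:V1}.

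The main obstacle is the computable enumerability of $N_e$: the set is cut out by an inequality on a Kolmogorov complexity against a threshold, so one must invoke upper semicomputability of prefix complexity and be careful that replacing the ball‑approximation complexity $K_{r,s}^e(p_ed\mid p_ed)$ by the dyadic‑approximation complexity $K^e(\lfloor p_ed\rfloor_r\mid\lfloor p_ed\rfloor_s)$ — which is what makes the condition genuinely about a single pair of finite strings — perturbs the threshold only by $O(\log r)$, comfortably within the slack in the definition of $N_e$. Everything else is a routine chain of the standard inequalities recalled in Section~\ref{ssec:strings}.
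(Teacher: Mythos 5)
Your proposal is correct and follows essentially the same route as the paper: show $\lfloor x\rfloor_r\in N_e$ (via Lipschitz projection and precision-shifting under the failure of~\eqref{eq:orth}), observe that $N_e$ is computably enumerable relative to $e$ given $z$, and then combine~\eqref{eq:cardcomp} with the hypothesis~\eqref{eq:x} (plus~\eqref{eq:soi},~\eqref{eq:dyadic}, and $K(z)=O(\log r)$). The paper runs the final chain of inequalities in the opposite direction but it is the identical estimate.

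The one place your write-up goes beyond the paper is the justification of computable enumerability of $N_e$, which the paper dispatches with a bare ``Observe that.'' Your point --- that one should pass to the dyadic-approximation form $K^e(\lfloor p_e d\rfloor_r\mid\lfloor p_e d\rfloor_s)$, which is upper semicomputable relative to $e$ once $z$ is given, rather than work with the ball-complexity $K^e_{r,s}(p_e d\mid p_e d)$ directly --- is a worthwhile clarification. Be a bit more careful about the bookkeeping, though: if you enumerate the set $S$ defined by a threshold lowered by $2C\log r$, then $S\subseteq N_e$, but you also need $\lfloor x\rfloor_r\in S$, and the margin you actually established above is only $4C_0\log r$; to make $\lfloor x\rfloor_r\in S$ you either need $4C_0\log r\geq 2C\log r + (\text{the slack you used})$ or, more cleanly, a slightly larger constant in the definition of $N_v$. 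This is harmless at the level of unspecified $O(\log r)$ constants (as in the paper), but your phrase ``which the $4C_0\log r$ term absorbs'' asserts it without checking. Apart from this, the argument is sound, and the key realization --- that a single high-complexity witness plus enumerability suffices, with no counting of fiber sizes --- is exactly right.
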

    \begin{proof}
        Observe that each set $N_v$ is uniformly computably enumerable relative to $v$, given $z$. Since
        \[|p_e\lfloor x\rfloor_r-p_e x|\leq |\lfloor x\rfloor_r-x|< 2^{1/2-r}\leq 2^{1/2-s},\]
        we have
        \begin{align*}
            K_{r,s}^e(p_e\lfloor x\rfloor_r\mid p_e\lfloor x\rfloor_r)&\leq K_{r-2,s+2}^e(p_e\lfloor x\rfloor_r\mid p_e\lfloor x\rfloor_r)+4C_0\log r\tag{by~\eqref{eq:precision} and~\eqref{eq:condprecision}}\\
            &\leq K_{r,s}^e(p_e x\mid p_e x)+4C_0\log r\\
            &<\frac{K_{r,s}(x\mid x)}{2}-\varepsilon r+4C_0\log r.\tag{by assumption}
        \end{align*}
        As we have already shown $\lfloor x\rfloor_r\in X$, we conclude that $\lfloor x\rfloor_r\in N_e$. Therefore,
        \begin{align*}
            K_r(x)&\leq K^e_r(x)+\delta r\tag{by~\eqref{eq:x}}\\
            &\leq K^e(\lfloor x\rfloor_r)+\delta r+C\log r\tag{by~\eqref{eq:dyadic}}\\
            &\leq K^e(\lfloor x\rfloor_r\mid z)+\delta r +O(\log r)~\tag{by~\eqref{eq:soi}}\\
            &\leq \log|N_e|+2\log\log|N_e|+\delta r+O(\log r)\tag{by~\eqref{eq:cardcomp}}\\
            &\leq \log|N_e|+\delta r+O(\log r),
        \end{align*}
        since $|N_e|\leq |X|\leq 2^{K_r(x)+C\log r+1}=2^{O(\log r)}$. Rearranging and choosing $C_1\in\N$ accordingly yields~\eqref{eq:Ne_card}.
    \end{proof}
    
    \begin{claim}
        \begin{equation}\label{eq:V_card}
            |V|\geq 2^{K_r(e)-O(\log r)}.
        \end{equation}
    \end{claim}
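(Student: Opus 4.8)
The plan is this. Since $V=\{v_d\mid d\in V_2\}$ and the assignment $d\mapsto v_d$ is injective (each $v_d$ satisfies $\lfloor v_d\rfloor_r=d$), we have $|V|=|V_2|$; and by construction $M(\pi)$ prints exactly $\lfloor 2^{K_r(e)-K_t(e)-C_2\log r}\rfloor$ directions from each $t$-cell it labels crowded, so $|V_2|$ equals the number of crowded cells times this quota. It therefore suffices to produce $2^{K_t(e)-O(\log r)}$ crowded cells; multiplying by the quota then yields $|V|\geq 2^{K_r(e)-O(\log r)}$. (One may assume the quota is at least $1$; if not, then $K_r(e)$ exceeds $K_t(e)$ by only $O(\log r)$, and the claim follows from a separate, trivial argument.) Note that there is no shortcut of the form ``$V_2$ is computably enumerable and contains a high-complexity point'': $M$ prints only the \emph{first} directions it enumerates in each crowded cell, which need not include $\lfloor e\rfloor_r$, so we are forced to count cells rather than elements.

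First, I would check that $\lfloor e\rfloor_r\in V_1$. By~\eqref{eq:dyadic}, $K(\lfloor e\rfloor_r)\leq K_r(e)+C\log r$, and since $\lfloor\lfloor e\rfloor_r\rfloor_t=\lfloor e\rfloor_t$, two applications of~\eqref{eq:dyadic} give $K_t(\lfloor e\rfloor_r)\leq K(\lfloor e\rfloor_t)+C\log t\leq K_t(e)+2C\log t$; hence $\lfloor e\rfloor_r\in S$. Taking $v=e$ as the witness, we have $\lfloor v\rfloor_r=\lfloor e\rfloor_r$ and, by the previous claim~\eqref{eq:Ne_card}, $|N_e|>2^{K_r(x)-\delta r-C_1\log r}$. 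So $\lfloor e\rfloor_r\in V_1$.

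Next, I would show that the cell $\lfloor e\rfloor_t$ is labeled crowded; this is the crux. Put $T=\{d'\in V_1\mid\lfloor d'\rfloor_t=\lfloor e\rfloor_t\}$, the portion of $V_1$ lying in that cell; by the previous step $\lfloor e\rfloor_r\in T$, and $T$ is computably enumerable relative to $z$ and $\lfloor e\rfloor_t$ (run $M(\pi)$, which enumerates $V_1$, retaining the entries in cell $\lfloor e\rfloor_t$). By~\eqref{eq:cardcomp}, $K(\lfloor e\rfloor_r\mid z,\lfloor e\rfloor_t)\leq\log|T|+O(\log\log|T|)$. Conversely,~\eqref{eq:soi} together with~\eqref{eq:dyadic} yields $K(\lfloor e\rfloor_r\mid\lfloor e\rfloor_t)\geq K(\lfloor e\rfloor_r)-K(\lfloor e\rfloor_t)-O(1)\geq K_r(e)-K_t(e)-O(\log r)$, and conditioning additionally on $z$ costs only $K(z)=O(\log r)$, so $K(\lfloor e\rfloor_r\mid z,\lfloor e\rfloor_t)\geq K_r(e)-K_t(e)-O(\log r)$. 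Hence $|T|\geq 2^{K_r(e)-K_t(e)-O(\log r)}$; choosing $C_2$ larger than this implicit constant makes $|T|>\lfloor 2^{K_r(e)-K_t(e)-C_2\log r}\rfloor$, so $M(\pi)$ eventually enumerates enough elements of $V_1$ in cell $\lfloor e\rfloor_t$ to label it crowded.

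Finally, I would count the crowded cells. They form a set computably enumerable relative to $z$ (run $M(\pi)$ and collect the labels), this set contains $\lfloor e\rfloor_t$ by the previous step, and $K(\lfloor e\rfloor_t)\geq K_t(e)-C\log t$ by~\eqref{eq:dyadic}; so by~\eqref{eq:cardcomp} and $K(z)=O(\log r)$ the number of crowded cells is at least $2^{K_t(e)-O(\log r)}$. Multiplying by the quota, $|V|=|V_2|\geq 2^{K_t(e)-O(\log r)}\cdot\lfloor 2^{K_r(e)-K_t(e)-C_2\log r}\rfloor\geq 2^{K_r(e)-O(\log r)}$. I expect the main obstacle to be the penultimate step: establishing the conditional-complexity bound $K(\lfloor e\rfloor_r\mid\lfloor e\rfloor_t)\gtrsim K_r(e)-K_t(e)$ and confirming that it forces the requisite number of elements of $V_1$ into the single $t$-cell $\lfloor e\rfloor_t$ — this is also where the constant $C_2$ in the definition of $M$ is pinned down. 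A secondary point is disposing of the degenerate regime in which the per-cell quota would collapse to $0$ or $1$.
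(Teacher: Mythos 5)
Your proposal is correct and takes essentially the same route as the paper's proof: verify $\lfloor e\rfloor_r\in S$, use the $N_e$ claim to place $\lfloor e\rfloor_r\in V_1$, lower-bound $|T|$ via conditional complexity and the enumeration bound to pin down $C_2$ (hence $\lfloor e\rfloor_t$ is labeled crowded), lower-bound the number of crowded cells the same way, and multiply by the per-cell quota. The only cosmetic difference is that you phrase the crux step as a lower bound on $K(\lfloor e\rfloor_r\mid z,\lfloor e\rfloor_t)$ via symmetry of information, while the paper runs the same chain as an upper bound on $K_r(e)$; these are algebraically the same move.

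One small caution: the parenthetical disposal of the degenerate ``quota $=0$'' case is hand-waved. If $\lfloor 2^{K_r(e)-K_t(e)-C_2\log r}\rfloor=0$ then $V$ is empty and the claim as stated literally fails; the actual reason this regime doesn't arise is that $\dim(e)>0$ (forced by $0<\delta<\varepsilon\dim(e)/4$) together with $t=\lfloor\varepsilon r/2\rfloor$ and $\varepsilon<1$ keeps $K_r(e)-K_t(e)$ growing linearly in $r$, not merely ``$K_r(e)$ exceeds $K_t(e)$ by only $O(\log r)$, so a separate trivial argument applies.'' The paper also leaves this tacit, so your treatment is no worse, but the remark you wrote does not actually supply the missing argument.
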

    \begin{proof}
        First note that $S$ is uniformly computably enumerable given $z$, and $\lfloor e\rfloor_r\in S$; it is immediate from~\eqref{eq:dyadic} that $K(\lfloor e\rfloor_r)< K_r(e)+C\log r$, and
        \begin{align*}
            K_t(\lfloor e\rfloor_r)-K_t(e)&\leq \left|K_t(\lfloor e\rfloor_r)-K(\lfloor e\rfloor_t)\right|+\left|K(\lfloor e\rfloor_t)-K_t(e)\right|\\
            &=\left|K_t(\lfloor e\rfloor_r)-K(\lfloor\lfloor e\rfloor_r\rfloor_t)\right|+\left|K(\lfloor e\rfloor_t)-K_t(e)\right|\\
            &<2C\log t.
        \end{align*}
        It follows that $V_1$ is also uniformly computably enumerable given $z$. By~\eqref{eq:Ne_card}, $\lfloor e\rfloor_r\in V_1$. Furthermore, there must be many other members of $V_1$ in the same $t$-dyadic interval as $\lfloor e\rfloor_r$; let
        \[T=\{d\in V_1\mid \lfloor d\rfloor_t=\lfloor e\rfloor_t\},\]
        and observe that $T$ is uniformly computably enumerable given $\lfloor e\rfloor_t$ and $z$. Therefore,
        \begin{align*}
            K_r(e)&\leq K(\lfloor e\rfloor_r)+C\log r\tag{by~\eqref{eq:dyadic}}\\
            &\leq  K(\lfloor e\rfloor_r\mid z)+K(\lfloor e\rfloor_t)+O(\log r)\tag{by~\eqref{eq:soi}}\\
            &\leq \log |T|+K_t(e)+O(\log r)\tag{by~\eqref{eq:cardcomp} and~\eqref{eq:dyadic}}
        \end{align*}
        so there is some constant $C_2\in\N$ such that
        \begin{equation}\label{eq:crowded}
            |T|\geq 2^{K_r(e)-K_t(e)-C_2\log r}.
        \end{equation}
        This is the constant $C_2$ we use in defining $M$.

        By~\eqref{eq:crowded}, $\lfloor e\rfloor_t$ will be labeled crowded. It follows that there must be many crowded $t$-dyadic directions. In particular, let $F$ be the set of directions labeled crowded. This set is uniformly computably enumerable given $z$, so
        \begin{align*}
            K_t(e)&\leq K(\lfloor e\rfloor_t)+C\log t\tag{by~\eqref{eq:dyadic}}\\
            &\leq K(\lfloor e\rfloor_t\mid z)+O(\log r)\tag{by~\eqref{eq:soi}}\\
            &\leq \log|F|+O(\log r).\tag{by~\eqref{eq:cardcomp}}
        \end{align*}
        Thus, at least $2^{K_t(e)-O(\log r)}$ $t$-dyadic directions will be labeled crowded. For each $d\in F$, the machine $M$ prints exactly $\lfloor 2^{K_r(e)-K_t(e)-C_2\log r}\rfloor$ $r$-dyadic directions, so we conclude that $|V_2|\geq 2^{K_r(e)-O(\log r)}$.
        As $|V|=|V_2|$, this proves~\eqref{eq:V_card}.
    \end{proof}

    \begin{claim}
        The sets $X$ and $V$, the collection $\{N_v\mid v\in V\}$, $\alpha=2^{-\delta r-(C+C_2)\log r-1}$,
        and the binary relation $|u-v|<2^{-t}$ satisfy the hypotheses of Lemma~\ref{lem:comb}.
    \end{claim}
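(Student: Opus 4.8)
The plan is to check, one at a time, the two hypotheses of Lemma~\ref{lem:comb}: the density lower bound $|N_v|\ge\alpha|X|$ for every $v\in V$, and the ``few similar points'' bound~\eqref{eq:fewsim}. Since the relation in play, $u\sim v\iff|u-v|<2^{-t}$, does not depend on $d$ (so $\sim_d=\sim$ for every $d\in X$, which is exactly the setting of the ``in particular'' clause of Lemma~\ref{lem:comb}), and since the quantity in~\eqref{eq:fewsim} does not depend on $d$ either, the second hypothesis reduces to the single statement
\[|\{u\in V\mid|u-v|<2^{-t}\}|\le\tfrac{\alpha^2}{4}|V|\qquad\text{for every }v\in V;\]
the universal quantifier over $d\in N_v$ is then vacuous or automatic, and $N_v$ is nonempty once the first hypothesis is in hand.

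For the first hypothesis I would use the three preliminary claims already established. Every $v\in V$ has the form $v_d$ for some $d\in V_2\subseteq V_1$, and by the defining property of $V_1$ in~\eqref{eq:V1} the point $v_d$ was selected so that $|N_v|=|N_{v_d}|>2^{K_r(x)-\delta r-C_1\log r}$. Dividing by the upper bound $|X|<2^{K_r(x)+C\log r+1}$ from~\eqref{eq:X} gives $|N_v|/|X|>2^{-\delta r-(C_1+C)\log r-1}$, which is at least $\alpha=2^{-\delta r-(C+C_2)\log r-1}$ provided $C_2\ge C_1$; since $C_2$ was introduced only as ``some constant'' satisfying~\eqref{eq:crowded} (an inequality preserved under enlarging $C_2$, with~\eqref{eq:V_card} still going through), we may assume this, and then $|N_v|\ge\alpha|X|$.

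For the second hypothesis the key structural fact is that, for each value of $\lfloor d'\rfloor_t$, the machine $M$ prints at most $\lfloor2^{K_r(e)-K_t(e)-C_2\log r}\rfloor$ directions $d'$, hence $V_2$ — and therefore $V$, via the injection $d\mapsto v_d$ with $\lfloor v_d\rfloor_r=d$ — contains at most that many elements in any single $t$-dyadic interval. Given $v=v_d\in V$, any $u=v_{d'}\in V$ with $|u-v|<2^{-t}$ satisfies $|d'-d|<2^{-t}+2^{1-r}<2^{1-t}$ for $r$ large, so $\lfloor d'\rfloor_t$ ranges over only $O(1)$ values; thus $|\{u\in V\mid|u-v|<2^{-t}\}|=O(2^{K_r(e)-K_t(e)-C_2\log r})$. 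Dividing by $|V|\ge2^{K_r(e)-O(\log r)}$ from~\eqref{eq:V_card},
\[\frac{|\{u\in V\mid|u-v|<2^{-t}\}|}{|V|}=O\bigl(2^{-K_t(e)+O(\log r)}\bigr),\]
so it suffices to show this is at most $\tfrac{\alpha^2}{4}=2^{-2\delta r-2(C+C_2)\log r-4}$, i.e.\ that $K_t(e)\ge2\delta r+O(\log r)$. Here I would invoke the definition of effective dimension: since $t=\lfloor\varepsilon r/2\rfloor\to\infty$, for any fixed $\eta>0$ we have $K_t(e)\ge(\dim(e)-\eta)t$ once $r$ is large. The hypothesis $\delta<\varepsilon\dim(e)/4$ of Theorem~\ref{thm:orth} makes $\beta:=\tfrac{\varepsilon\dim(e)}{2}-2\delta$ strictly positive (in particular $\dim(e)>0$, since $\delta>0$), so taking $\eta=\beta/\varepsilon$ gives $K_t(e)-2\delta r\ge\tfrac{\beta}{2}r-O(1)$, which dominates the $O(\log r)$ slack for $r$ sufficiently large.

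The step I expect to be the only real content — everything else being bookkeeping that mirrors the corresponding claim for Theorem~\ref{thm:pinned} — is this last balancing of exponents: one must make sure the fixed linear margin $\beta r$ bought by $\delta<\varepsilon\dim(e)/4$ simultaneously absorbs the $\eta t$ loss incurred in passing from $K_t(e)$ to its dimensional lower bound and all the accumulated $O(\log r)$ terms (from~\eqref{eq:X}, \eqref{eq:V_card}, the $O(1)$ count of $t$-dyadic intervals, and the constants $C$, $C_1$, $C_2$). Once that is in place, all hypotheses of Lemma~\ref{lem:comb} hold for the stated $X$, $V$, $\{N_v\}$, and $\alpha$.
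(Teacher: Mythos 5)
Your proposal is correct and follows essentially the same route as the paper's own proof: for the density condition you combine the lower bound on $|N_v|$ built into the definition of $V_1$ with the upper bound on $|X|$ from~\eqref{eq:X}, and for~\eqref{eq:fewsim} you cap the number of $t$-dyadic intervals a $2^{-t}$-ball can meet, multiply by the per-interval quota imposed by $M$, divide by the lower bound~\eqref{eq:V_card} on $|V|$, and absorb everything using $K_t(e)\geq(\dim(e)-\eta)t$ and the hypothesis $\delta<\varepsilon\dim(e)/4$. You are also a bit more explicit than the paper about two small points it glosses over: that one may take $C_2\geq C_1$ (both constants only need to be large enough, and enlarging $C_2$ does not break~\eqref{eq:crowded} or~\eqref{eq:V_card}), and that since $\sim_d$ is independent of $d$ the universal quantifier over $d\in N_v$ in~\eqref{eq:fewsim} collapses to a single statement once $N_v\neq\emptyset$.
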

    \begin{proof}
        From~\eqref{eq:X} and~\eqref{eq:V1}, we can see that the condition $|N_v|\geq\alpha|X|$ is met by all $v\in V$. For~\eqref{eq:fewsim}, observe that $|u-v|<2^{-t}$ implies $|\lfloor u\rfloor_t-\lfloor v\rfloor_t|\leq 2^{-t}$, so $v$ can only be $2^{-t}$-close to directions in its own $t$-dyadic interval or one of the two adjacent $t$-dyadic intervals. The enumeration of $V_2$ by $M$ ensures that each $t$-dyadic interval's intersection with $V_2$ contains either zero or $\lfloor 2^{K_r(e)-K_t(e)-C_2\log r}\rfloor$ directions, and this property is carried over to $V$ as each $v_d$ is in the same $t$-dyadic interval as $d$. Therefore,
        \begin{align*}
            |\{u\in V\mid |u-v|<2^{-t}\}|&\leq 3\lfloor 2^{K_r(e)-K_t(e)-C_2\log r}\rfloor\\
            &\leq 2^{-K_t(e)+O(\log r)}|V|\\
            &\leq 2^{-t\dim(e)+o(t)+O(\log r)}|V|\\
            &=2^{-\varepsilon r\dim(e)/2+o(r)}|V|\tag{$t=\lfloor \varepsilon r/2\rfloor$}\\
            &<2^{-2\delta r-2(C+C_2)\log r-4}|V|\\
            &=\frac{\alpha^2}{4}|V|
        \end{align*}
        holds when $r$ is sufficiently large, as $\delta<\varepsilon\dim(e)/4$.
    \end{proof}

    Hence, we can apply Lemma~\ref{lem:comb} to let $u,v\in V$ be such that $|u-v|\geq 2^{-t}$ and
    $|N_u\cap N_v|>2^{K_r(x)-2\delta r-O(\log r)}$.
    Since this intersection is large, it contains high-complexity points relative to every oracle; in particular, let $d\in N_u\cap N_v$ such that
    \begin{equation}\label{eq:Kuvd}
        K^{u,v}(d)\geq K_r(x)-2\delta r-O(\log r).
    \end{equation}
    
    Let $\pi_u$ and $\pi_v$ be witnesses to $K^u(\lfloor p_u d\rfloor_r \mid \lfloor p_u  d\rfloor_s)$ and $K^v(\lfloor p_v d\rfloor_r\mid \lfloor p_v d\rfloor_s)$, respectively, and define the sets $P_u=\{U^u(\pi_u,d'):d'\in\D_s\}$ and $P_v=\{U^v(\pi_v,d'):d'\in\D_s\}$. Intuitively, each of these sets contains, within each $s$-dyadic interval, at most one candidate value for an approximation of a projection of $d$. In particular, $\lfloor p_u d\rfloor_r\in P_u$ and $\lfloor p_v d\rfloor_r\in P_v$. Let $D_{u,v}$ be the set of all $r$-dyadic points that share these properties; that is,
    \[D_{u,v}=\left\{y\in\D_r^2\mid \lfloor p_u y\rfloor_r\in P_u\text{ and }\lfloor p_v y\rfloor_r\in P_v\right\}.\]
    For any two points $y,y'\in D_{u,v}$ such that $|y-y'|<2^{1-s}$, we have
    \begin{align*}
        |\lfloor p_u y\rfloor_r -\lfloor p_u y'\rfloor_r|&\leq |\lfloor p_u y\rfloor_r -p_u y|+|p_u y-p_u y'|+|p_u y'-\lfloor p_u y'\rfloor_r|\\
        &<2^{-r}+|y-y'|+2^{-r}\\
        &< 2^{2-s},
    \end{align*}
    and similarly $|\lfloor p_v y\rfloor_r -\lfloor p_v y'\rfloor_r|<2^{2-s}$. As each $s$-dyadic interval contains at most one element of $P_u$ and one element of $P_v$, this implies that the set
    \[P_{u,v}=\{(\lfloor p_u y\rfloor_r,\lfloor p_v y\rfloor_r)\mid y\in D_{u,v}\text{ and }\lfloor y\rfloor_s=\lfloor d\rfloor_s\}\]
    has cardinality $O(1)$. Furthermore, since $u$ and $v$ are $2^{-t}$-separated, the set
    \[\hat{D}_{u,v}=\{y\in\D_r^2\mid \lfloor p_u y\rfloor_r=\lfloor p_u d\rfloor_r\text{ and }\lfloor p_v y\rfloor_r=\lfloor p_v d\rfloor_r\}\]
    is contained within a parallelogram of diameter $O(2^{t-r})$, meaning some point in $\hat{D}_{u,v}$ is a precision-$(r-t-O(1))$ approximation of $d$ that can be computed, relative to $u$ and $v$, given $\pi_u$, $\pi_v$, a witness to $K(\lfloor d\rfloor_s)$, $O(1)$ bits to specify $(\lfloor p_u d\rfloor_r,\lfloor p_v d\rfloor_r)$ within the set $P_{u,v}$, and $O(1)$ bits for the description of the Turing machine that performs this computation. That is,
    \begin{equation}\label{eq:compute_d}
        K^{u,v}_{r-t-O(1)}(d)\leq |\pi_u|+|\pi_v|+K(\lfloor d\rfloor_s)+O(1).
    \end{equation}
    Therefore,
    \begin{align*}
        K_{r,s}(x\mid x)&=K(\lfloor x\rfloor_r\mid \lfloor x\rfloor_s)+O(\log r)\tag{by~\eqref{eq:dyadic} and~\eqref{eq:dyadiccond}}\\
        &= K(\lfloor x\rfloor_r)-K(\lfloor x\rfloor_s)+O(\log r)\tag{by~\eqref{eq:soi}}\\
        &=K_r(x)-K(\lfloor x\rfloor_s)+O(\log r)\tag{by~\eqref{eq:dyadic}}\\
        &\leq K^{u,v}(d)-K_s(x)+2\delta r+O(\log r)\tag{by~\eqref{eq:Kuvd}}\\
        &= K_r^{u,v}(d)-K(\lfloor d\rfloor_s)+2\delta r+O(\log r)\tag{by~\eqref{eq:dyadic}}\\
        &\leq K^{u,v}_{r-t-O(1)}(d)+2t-K(\lfloor d\rfloor_s)+2\delta r+O(\log r)\tag{by~\eqref{eq:precision}}\\
        &\leq |\pi_u|+|\pi_v|+2t+2\delta r+O(\log r)\tag{by~\eqref{eq:compute_d}}\\
        &= K^u(\lfloor p_u d\rfloor_r \mid \lfloor p_u  d\rfloor_s)+K^v(\lfloor p_v d\rfloor_r \mid \lfloor p_v  d\rfloor_s)+2t+2\delta r+O(\log r)\\
        &= K_{r,s}^u(p_u d\mid p_u  d)+K_{r,s}^v(p_v d\mid p_v  d)+2t+2\delta r+O(\log r)\tag{by~\eqref{eq:dyadic} and~\eqref{eq:dyadiccond}}\\
        &\leq K_{r,s}(x\mid x)-2\varepsilon r+2t+2\delta r+O(\log r)\tag{$d\in N_u\cap N_v$}\\
        &<K_{r,s}(x\mid x)-\varepsilon r+2\delta r+O(\log r).\tag{$t=\lfloor \varepsilon r/2\rfloor$}
    \end{align*}
    As $\delta<\varepsilon\dim(e)/4\leq \varepsilon/4$, this implies $\varepsilon r/2< O(\log r)$, a contradiction for sufficiently large $r$. We conclude that~\eqref{eq:orth} holds.
\end{proof}

\section{Hausdorff Dimension Bounds}\label{HDB}

We now prove our Hausdorff dimension bound for pinned distance sets. For all $s\in(0,1)$, this is a strict improvement over the $\frac{s-2+\sqrt{4+s^2}}{2}$ lower bound of Shmerkin and Wang~\cite{ShmWang25} and the $s\left(1-\frac{2-s}{2(1+2s-s^2)}\right)$ lower bound of Fiedler and Stull~\cite{FieStu24a}, which are the best previously known bounds. For example, where those bounds coincide at $s\approx 0.47671$, they give $\sup_x\dimH(\Delta_x E)\gtrsim 0.26637$, while our bound gives $\sup_x\dimH(\Delta_x E)\gtrsim 0.35753$.

\begin{figure}[t]
\centering
\begin{tikzpicture}
\begin{axis}[
  width=0.75\linewidth, height=0.75\linewidth,
  xmin=0, xmax=1, ymin=0, ymax=0.8,
  grid=none,
  xlabel={$\dimH(E)$},
  ylabel={lower bound on $\sup_{x\in E}\dimH(\Delta_x E)$},
  tick label style={/pgf/number format/fixed},
  legend style={at={(0.02,0.98)},anchor=north west,draw=none,fill=none,font=\footnotesize}
]
  \addplot[very thick] {0.75*x};
  \addlegendentry{$\tfrac{3}{4}\dimH(E)$ (this work)}
  \addplot[dashed,thick,domain=0:1,samples=400]
    { x * (1 - (2 - x) / (2 * (1 + 2*x - x^2))) };
  \addlegendentry{Fiedler and Stull~\cite{FieStu24a}}
  \addplot[dotted,thick,domain=0:1,samples=400]
    { 0.5 * (x - 2 + sqrt(4 + x^2)) };
  \addlegendentry{Shmerkin and Wang~\cite{ShmWang25}}
\end{axis}
\end{tikzpicture}
\caption{Pinned distance set lower bounds for $\dimH(E)\in[0,1]$.}
\label{fig:pinned-comparison}
\end{figure}
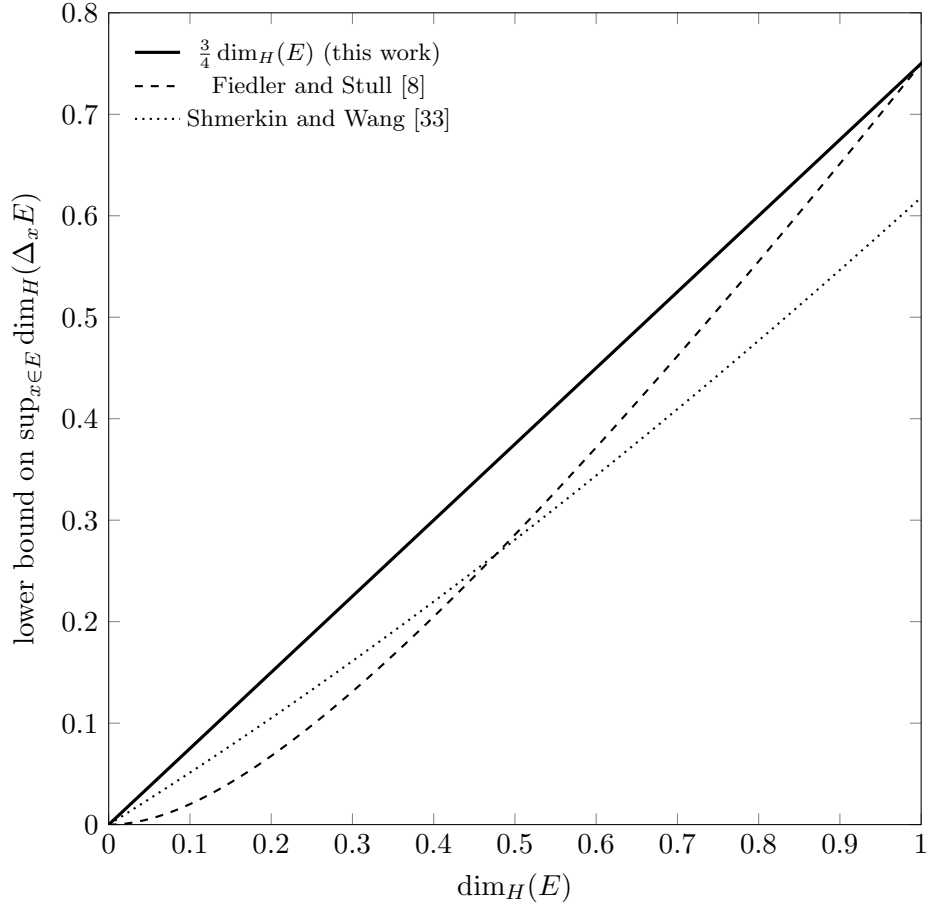

Our proof relies on an effective dimension bound, Theorem~\ref{thm:effDistanceThm}, for which will use the following results of Lutz and Stull~\cite{LutStu24}, Stull~\cite{Stull22c}, and Fiedler and Stull~\cite{FieStu24a}.
    \begin{lemma}[Lutz and Stull~\cite{LutStu24}]\label{lem:oracles}
    Let $z\in\R^2$, $\eta \in\Q_+$, and $r\in\N$. Then there is an oracle $D=D(r,z,\eta)$ with the following properties.
    \begin{itemize}
        \item[(i)] For every $t\leq r$,
        $K^D_t(z)=\min\{\eta r,K_t(z)\}+O(\log r)$.
        \item[(ii)] For every $m,t\in\N$ and $y\in\R^m$,
        $K^{D}_{t,r}(y\mid z)=K_{t,r}(y\mid z)+ O(\log r)$,
        and
        \[K_t^{z,D}(y)=K_t^z(y)+ O(\log r).\]
        \item[(iii)] If $B\subseteq\N$ satisfies $K^B_r(z) \geq K_r(z) - O(\log r)$, then $K_r^{B,D}(z)\geq K_r^D(z) - O(\log r)$.
        \item[(iv)] For every $t\in\N$, $u\in\R^n, w\in\R^m$
        \[K_{r,t}(u\mid w) \leq K^D_{r,t}(u\mid w) + K_r(z) - \eta r + O(\log r)\,.\]
    \end{itemize}
    In particular, this oracle $D$ encodes $\sigma$, the lexicographically first time-minimizing witness to $K(\lfloor z\rfloor_r\mid \lfloor z\rfloor_s)$, where $s = \max\{t \leq r \mid K_{t-1}(z) \leq \eta r\}$.
    \end{lemma}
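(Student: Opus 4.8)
The plan is to take the oracle $D$ to encode the finite string $\sigma$ named in the statement---with the parameters $r$, $s$, $\eta$ recoverable at cost $O(\log r)$---and to verify the four properties by symmetry-of-information bookkeeping. First set $s=\max\{t\le r\mid K_{t-1}(z)\le\eta r\}$, which exists once $r$ is large since $K_0(z)=O(1)$, and let $\sigma$ be the lexicographically first time-minimizing witness to $K(\lfloor z\rfloor_r\mid\lfloor z\rfloor_s)$. Two facts about this setup drive everything. (a) Adding one bit of precision changes $K_t(z)$ by only $O(\log r)$, and a coarser dyadic approximation is computable from a finer one, so by the maximality of $s$ we get $K_t(z)\le\eta r+O(\log r)$ for all $t\le s$ and, when $s<r$, also $K_s(z)>\eta r$, hence $K_s(z)=\eta r+O(\log r)$. (b) Because $\sigma$ is the time-minimizing, lexicographically first such program, it is computable from $\lfloor z\rfloor_r$ (equivalently, from $\lfloor z\rfloor_s$, $\sigma$, $r$, $s$) by a brute-force dovetailing search; and being a shortest program, $K(\sigma)=|\sigma|+O(\log r)=K(\lfloor z\rfloor_r\mid\lfloor z\rfloor_s)+O(\log r)=K_r(z)-K_s(z)+O(\log r)$, the last step by~\eqref{eq:soi} and computability of $\lfloor z\rfloor_s$ from $\lfloor z\rfloor_r$.

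Properties (ii), (iii), (iv) then follow quickly from the principle that $D$ is a finite oracle computable from a precision-$r$ approximation of $z$ plus $O(\log r)$ advice. For (iv), relativizing to a finite string $B$ lowers any complexity by at most $K(B)+O(\log|B|)$, which for $B=\sigma$ is at most $K_r(z)-\eta r+O(\log r)$ by (a). For (ii), any $D$-oracle computation that already has a precision-$r$ approximation of $z$ available---as the conditioning string in $K^D_{t,r}(y\mid z)$, or via $A_z$ in $K^{z,D}_t(y)$---can be simulated without $D$ at the cost of the $O(\log r)$ advice needed to rebuild $\sigma$, and the reverse inequalities are trivial. For (iii), if $K^B_r(z)\ge K_r(z)-O(\log r)$ then, since $\sigma$ is computable from $\lfloor z\rfloor_r$ plus $O(\log r)$ advice relative to any oracle, the $B$-relativized~\eqref{eq:soi} gives $K^{B,D}_r(z)=K^B(\lfloor z\rfloor_r\mid\sigma)+O(\log r)=K^B(\lfloor z\rfloor_r)-K^B(\sigma)+O(\log r)\ge K_r(z)-K(\sigma)-O(\log r)=K^D_r(z)-O(\log r)$, the last step being the $t=r$ case of (i).

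The substance is property (i). Relativizing~\eqref{eq:dyadic} and using that a finite oracle and its conditional-string counterpart agree up to a logarithmic term gives $K^D_t(z)=K(\lfloor z\rfloor_t\mid\sigma)+O(\log r)$ for every $t\le r$. When $s\le t\le r$, the pair $(\lfloor z\rfloor_t,\sigma)$ is interreducible with $\lfloor z\rfloor_r$ up to $O(\log r)$ bits: $\lfloor z\rfloor_t$ yields $\lfloor z\rfloor_s$, which with $\sigma$ yields $\lfloor z\rfloor_r$, while $\lfloor z\rfloor_r$ yields back $\lfloor z\rfloor_t$ and $\sigma$; hence $K(\lfloor z\rfloor_t,\sigma)=K_r(z)+O(\log r)$, and~\eqref{eq:soi} gives $K(\lfloor z\rfloor_t\mid\sigma)=K_r(z)-K(\sigma)+O(\log r)=K_s(z)+O(\log r)=\eta r+O(\log r)$, which equals $\min\{\eta r,K_t(z)\}+O(\log r)$ since $K_t(z)\ge K_s(z)>\eta r$. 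When $t<s$, the target is instead that $\sigma$ is essentially independent of the coarse approximation, $K(\sigma\mid\lfloor z\rfloor_t)=K(\sigma)+O(\log r)$; this comes out of combining $K(\lfloor z\rfloor_r\mid\lfloor z\rfloor_t)\le K(\lfloor z\rfloor_s\mid\lfloor z\rfloor_t)+K(\sigma\mid\lfloor z\rfloor_t)+O(\log r)$ (compute $\lfloor z\rfloor_s$, then apply $\sigma$) with $K(\lfloor z\rfloor_r\mid\lfloor z\rfloor_t)=K_r(z)-K_t(z)+O(\log r)$, $K(\lfloor z\rfloor_s\mid\lfloor z\rfloor_t)=K_s(z)-K_t(z)+O(\log r)$, and $K(\sigma)=K_r(z)-K_s(z)+O(\log r)$, which together force $K(\sigma\mid\lfloor z\rfloor_t)\ge K(\sigma)-O(\log r)$; the reverse is automatic. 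Then~\eqref{eq:soi} yields $K(\lfloor z\rfloor_t\mid\sigma)=K(\lfloor z\rfloor_t)+O(\log r)=K_t(z)+O(\log r)$, which is $\min\{\eta r,K_t(z)\}+O(\log r)$ by (a).

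The main obstacle is the independence estimate $K(\sigma\mid\lfloor z\rfloor_t)=K(\sigma)+O(\log r)$ for $t<s$: this is the only place that uses the precise choice of $\sigma$ as a genuine shortest witness (so that $K(\sigma)=|\sigma|+O(\log r)$), and one must confirm that the logarithmic slack in every reduction and in every use of~\eqref{eq:soi} remains additive rather than accumulating over the up to $r$ precision scales---which is why each reduction invoked is a single-step computation (a coarser approximation from a finer one, or $\lfloor z\rfloor_r$ from $\lfloor z\rfloor_s$ together with $\sigma$) and never a chain of $\Theta(r)$ one-bit steps. The time-minimizing, lexicographically first stipulation on $\sigma$ is exactly what makes $\sigma$ computable from $\lfloor z\rfloor_r$, which in turn underlies properties (ii)--(iv).
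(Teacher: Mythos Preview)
The paper does not supply a proof of this lemma; it is quoted from \cite{LutStu24} and used as a black box, with the final sentence of the statement already specifying the oracle $D$ as (an encoding of) the particular witness string $\sigma$. Your proposal follows exactly that indicated construction and verifies (i)--(iv) by standard symmetry-of-information bookkeeping; the argument is correct, and there is nothing in the present paper to compare it against beyond the statement itself.
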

    
    \begin{observation}[Stull~\cite{Stull22c}]\label{obs:geometricObs}
        Let $x,y, z\in\R^2$ such that $\vert x - y\vert  = \vert x-z\vert $. Let $e_1 = \frac{y - x}{\vert x - y\vert }$ and $e_2 = \frac{y-z}{\vert y-z\vert }$. Let $w\in\R^2$ be the midpoint of the line segment from $y$ to $z$, and $e_3 = \frac{w-x}{\vert x-w\vert }$. Then 
        \begin{enumerate}
            \item $|p_{e_1} y - p_{e_1} z| = \frac{\vert y - z\vert ^2}{2\vert x-y\vert }$, and
            \item $\vert e_1 - e_3\vert  < \frac{\vert y-z\vert }{\vert x-y\vert }$.
        \end{enumerate}
    \end{observation}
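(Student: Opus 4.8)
The two claims are direct computations in the plane, so the plan is to set up convenient coordinates and then verify each identity. Place $x$ at the origin and let $\rho=|x-y|=|x-z|$, so both $y$ and $z$ lie on the circle of radius $\rho$ about $x$. Write $y=\rho e_1$ (consistent with the definition $e_1=\frac{y-x}{|x-y|}$) and $z=\rho f$ for some unit vector $f$. The midpoint is $w=\frac{y+z}{2}=\frac{\rho}{2}(e_1+f)$, and $e_3=\frac{w-x}{|x-w|}=\frac{e_1+f}{|e_1+f|}$.

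For part (1), note $p_{e_1}y-p_{e_1}z=\langle y-z,e_1\rangle=\rho\langle e_1-f,e_1\rangle=\rho(1-\langle e_1,f\rangle)$. On the other hand $|y-z|^2=\rho^2|e_1-f|^2=\rho^2(2-2\langle e_1,f\rangle)=2\rho^2(1-\langle e_1,f\rangle)$, so $\frac{|y-z|^2}{2|x-y|}=\frac{2\rho^2(1-\langle e_1,f\rangle)}{2\rho}=\rho(1-\langle e_1,f\rangle)$, which matches (after taking absolute values, using that $1-\langle e_1,f\rangle\geq 0$). This handles (1) cleanly; the only subtlety is orienting $e_1$ so that the signed projection difference is nonnegative, which the absolute values in the statement absorb.

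For part (2), I would compute $|e_1-e_3|$ and compare it to $\frac{|y-z|}{|x-y|}=|e_1-f|$. Since $e_1$ and $e_3$ are unit vectors, $|e_1-e_3|^2=2-2\langle e_1,e_3\rangle=2-2\frac{\langle e_1,e_1+f\rangle}{|e_1+f|}=2-2\frac{1+\langle e_1,f\rangle}{|e_1+f|}$. Writing $c=\langle e_1,f\rangle\in[-1,1]$ and $|e_1+f|=\sqrt{2+2c}$, this becomes $2-\frac{2(1+c)}{\sqrt{2+2c}}=2-\sqrt{2(1+c)}$. Meanwhile $|e_1-f|^2=2-2c$. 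So part (2) reduces to the scalar inequality $2-\sqrt{2(1+c)}<2-2c$, i.e. $2c<\sqrt{2(1+c)}$, which I would verify by cases: it is trivial when $c\leq 0$ (left side nonpositive, right side positive), and for $c\in(0,1]$ squaring gives $4c^2<2+2c$, i.e. $2c^2-c-1<0$, i.e. $(2c+1)(c-1)<0$, true for $c\in(-\tfrac12,1)$ and holding with equality only at $c=1$ (which is the degenerate case $y=z$, where both sides of (2) vanish but the strict inequality as stated fails on a measure-zero set — I would note the implicit assumption $y\neq z$, which is anyway needed for $e_2$ to be defined).

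\textbf{Main obstacle.} There is no real obstacle; the only things to be careful about are (a) the sign/orientation convention for $e_1$ in part (1), handled by the absolute value, and (b) the degenerate case $y=z$ in part (2), which should be excluded since $e_2$ is then undefined. Everything else is a two-line trigonometric identity, most transparently phrased through the single parameter $c=\langle e_1,f\rangle=\cos\theta$ where $\theta$ is the angle $\angle yxz$.
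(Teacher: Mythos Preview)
Your argument is correct. The paper does not actually prove this observation; it is quoted from Stull~\cite{Stull22c} and used as a black box in the proof of Theorem~\ref{thm:effDistanceThm}. Your coordinate computation via the parameter $c=\langle e_1,f\rangle$ is exactly the kind of elementary verification one would expect for such a statement, and both parts check out.

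One very minor addition to your list of degenerate cases: besides $y=z$ (which makes $e_2$ undefined), the case $c=-1$, i.e.\ $z$ antipodal to $y$ on the circle about $x$, makes $w=x$ and hence $e_3$ undefined. Your inequality $2c<\sqrt{2(1+c)}$ still holds there in the limit, but strictly speaking the statement presupposes $w\neq x$ as well. Neither degeneracy arises in the paper's application, where $z\in\mathcal{B}_{2^{-t}}(y)$ with $t$ large.
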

    
    \begin{observation}[Stull~\cite{Stull22c}]\label{obs:existProj}
        Let $x\in \R^2$, $d \in \R_{> 0}$, $p \in \Q^2$, and $r \in \N$ such that $\vert d  - \vert x - p\vert  \vert < 2^{-r}$. Then there is a $y \in \R^2$ such that $\vert  p - y\vert  < 2^{-r}$ and $d = \vert x-y\vert $.
    \end{observation}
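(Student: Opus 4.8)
The plan is to exhibit the point $y$ explicitly, namely as the point on the ray from $x$ through $p$ lying at distance exactly $d$ from $x$. Geometrically, the hypothesis says that $p$ sits at distance $|x-p|$ from $x$, and $|x-p|$ is within $2^{-r}$ of the radius $d$, so the circle $\{y\in\R^2 : |x-y| = d\}$ must pass within $2^{-r}$ of $p$; the radial direction from $x$ is what realizes this, and that is the $y$ I would write down.

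First I would dispose of the degenerate case $p = x$. Then $|x-p| = 0$, so the hypothesis forces $d < 2^{-r}$; picking any unit vector $e$ and setting $y = x + d\,e$ gives $|x-y| = d$ and $|p-y| = |x-y| = d < 2^{-r}$, as required. Otherwise $p \neq x$, and I would set $e = \frac{p-x}{|p-x|}$ and $y = x + d\,e$, so that $|x-y| = d$ by construction. Writing $p - x = |p-x|\,e$, one computes
\[
    |p - y| = \big|(p-x) - d\,e\big| = \big||x-p| - d\big| < 2^{-r},
\]
where the final inequality is exactly the hypothesis, so this $y$ has both required properties.

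I do not expect any genuine obstacle: the statement is an elementary fact about a circle and a nearby point. The only thing requiring a moment's care is the case split on whether $p$ coincides with $x$, which is needed purely so that the direction $e$ is well defined; once that is handled, the verification is the one-line computation above.
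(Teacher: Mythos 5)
Your proof is correct and is the natural argument: you project radially from $x$, taking $y$ on the ray through $p$ at distance exactly $d$, and the one-line computation $|p-y| = \big||x-p|-d\big| < 2^{-r}$ closes it. The paper does not reprove this observation (it is cited from Stull), but your argument, including the careful case split when $p = x$ so that the direction vector is well defined, is the standard one and is essentially the only sensible route.
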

    
    \begin{lemma}[Stull~\cite{Stull22c}]\label{lem:lowerBoundOtherPointDistance}
    Let $x, y\in \R^2$ and $r\in \N$. Let $z\in \R^2$ such that $\vert x-y\vert = \vert x-z\vert$. Then for every $A\subseteq \N$,
    \begin{equation}
    K^A_r(z) \geq K^A_t(y) + K^A_{r-t, r}(x\mid y) - K_{r-t}(x\mid p_{e^\prime} x, e^\prime) - O(\log r),
    \end{equation}
    where $e^\prime = \frac{y-z}{\vert y-z\vert}$ and $t = -\log \vert y-z\vert$.
    \end{lemma}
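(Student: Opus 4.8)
The plan is to extract, from a precision‑$r$ approximation of $z$, two separate pieces of information—a precision‑$t$ approximation of $y$ and a precision‑$(r-t)$ approximation of $x$—and then recombine them via the symmetry of information~\eqref{eq:soi}. The only geometric facts needed are that $x$ lies on the perpendicular bisector of the segment $[y,z]$, so that, writing $w$ for the midpoint of $[y,z]$ and using $p_{e'}(y-z)=|y-z|$, one has the \emph{exact} identity $p_{e'}x=p_{e'}w=p_{e'}z+\tfrac{1}{2}|y-z|$; and that, because $|y-z|=2^{-t}$, both $e'=(y-z)/|y-z|$ and $p_{e'}x$ can be computed to precision $r-t$ from $\lfloor y\rfloor_r$ and $\lfloor z\rfloor_r$, since dividing by $|y-z|$ amplifies the $2^{-r}$ coordinate error by the factor $2^{t}$—this amplification is precisely where the loss of $t$ bits of precision comes from.

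First I would record the easy half. Since $|y-z|=2^{-t}$, the dyadic point $\lfloor y\rfloor_t$ lies within $O(1)$ grid steps of $\lfloor z\rfloor_t$, so $\lfloor y\rfloor_t$ is computable from $\lfloor z\rfloor_r$ with $O(\log r)$ bits of advice; by~\eqref{eq:soi} (together with $K^A_r(z)=K^A(\lfloor z\rfloor_r)+O(\log r)$ from~\eqref{eq:dyadic}) this yields $K^A_r(z)\ge K^A_t(y)+K^A(\lfloor z\rfloor_r\mid\lfloor y\rfloor_t)-O(\log r)$. Then I would observe that conditioning on the coarser approximation $\lfloor y\rfloor_t$ can only raise the conditional complexity compared with conditioning on $\lfloor y\rfloor_r$, so $K^A(\lfloor z\rfloor_r\mid\lfloor y\rfloor_t)\ge K^A(\lfloor z\rfloor_r\mid\lfloor y\rfloor_r)-O(\log r)$.

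The substantive step is to bound $K^A(\lfloor z\rfloor_r\mid\lfloor y\rfloor_r)$ from below. Here I would show that from $\lfloor z\rfloor_r$, $\lfloor y\rfloor_r$, and a shortest program witnessing $K_{r-t}(x\mid p_{e'}x,e')$, one can compute $\lfloor x\rfloor_{r-t}$: compute $t$ and $|y-z|$, then $e'$ and $p_{e'}x=p_{e'}z+\tfrac{1}{2}|y-z|$ to precision $r-t$ as above, then feed these precision‑$(r-t)$ approximations to the witness. Hence $K^A(\lfloor x\rfloor_{r-t}\mid\lfloor y\rfloor_r)\le K^A(\lfloor z\rfloor_r\mid\lfloor y\rfloor_r)+K_{r-t}(x\mid p_{e'}x,e')+O(\log r)$, and since $K^A(\lfloor x\rfloor_{r-t}\mid\lfloor y\rfloor_r)=K^A_{r-t,r}(x\mid y)+O(\log r)$ by~\eqref{eq:dyadic} and~\eqref{eq:dyadiccond}, rearranging gives $K^A(\lfloor z\rfloor_r\mid\lfloor y\rfloor_r)\ge K^A_{r-t,r}(x\mid y)-K_{r-t}(x\mid p_{e'}x,e')-O(\log r)$. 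Chaining the three inequalities above proves the lemma.

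The main obstacle is the Kolmogorov‑complexity bookkeeping rather than the geometry: one must run the symmetry‑of‑information step carefully (conditioning also on $K^A(\lfloor y\rfloor_t)$, which is $O(r)$ and hence costs only $O(\log r)$ bits to name), keep track of the several passages between ball approximations $\mathcal{B}_{2^{-r}}(\cdot)$ and dyadic truncations $\lfloor\cdot\rfloor_r$ (each absorbed into $O(\log r)$ via the constant $C$), and dispose of the edge case where $t$ is close to $r$, in which case $r-t$ is small and the term $K^A_{r-t,r}(x\mid y)$ is nearly vacuous; implicitly $t\le r$, as is the case in the intended applications. If a more quantitative treatment of the geometry is wanted, Observation~\ref{obs:geometricObs} makes precise the near‑orthogonality of the chord $[y,z]$ to the radius $[x,y]$, but the exact identity $p_{e'}x=p_{e'}z+\tfrac{1}{2}|y-z|$ already suffices for this argument.
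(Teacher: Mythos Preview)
The paper does not include a proof of this lemma; it is quoted from~\cite{Stull22c} and used as a black box. Your argument is correct and is essentially the standard proof one would expect: the perpendicular-bisector identity $p_{e'}x=p_{e'}w=p_{e'}z+\tfrac12|y-z|$ lets one recover $(e',p_{e'}x)$ to precision $r-t$ from $\lfloor y\rfloor_r$ and $\lfloor z\rfloor_r$ (dividing by $|y-z|=2^{-t}$ is exactly what costs the $t$ bits of precision), whence $x$ to precision $r-t$ at the additional cost $K_{r-t}(x\mid p_{e'}x,e')$; combining with symmetry of information and the fact that $\lfloor y\rfloor_t$ is $O(\log r)$-computable from $\lfloor z\rfloor_r$ yields the bound. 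The bookkeeping caveats you flag (conditioning on $K^A(\lfloor y\rfloor_t)$, passing between balls and dyadic truncations, the boundedness needed so that the error in $e'\cdot w$ stays $O(2^{-(r-t)})$) are the right ones and are all absorbed into $O(\log r)$.
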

    
    \begin{lemma}[Fiedler and Stull~\cite{FieStu24a}]\label{lem:distanceEnumeration}
        Suppose that $B\subseteq\N$, $x, y\in\R^2$, $t<r\in\N$, $\sigma\in(0, 1]$ and $\eta, \varepsilon \in\Q_+$ satisfy the following conditions.
        \begin{itemize}
            \item[(i)] $K^B_r(y)\leq \left(\eta +\frac{\varepsilon}{2}\right)r$.
            \item[(ii)] For every $z \in \mathcal{B}_{2^{-t}}(y)$ such that $\vert x-y\vert = \vert x-z\vert$, \[K^B_{r}(z)\geq \eta r + \min\{\varepsilon r, \sigma (r-s) -\varepsilon r\}\,,\]
            where $s=-\log\vert y-z\vert\leq r$.
        \end{itemize}
        Then,
        \[K_{r,t}^{B, x}(y \mid y) \leq K^{B,x}_{r,t}( \vert x-y\vert\mid y) + \frac{3\varepsilon}{\sigma} r + K(\varepsilon,\eta)+O(\log r)\,.\]
    \end{lemma}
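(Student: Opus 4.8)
The plan is to show that, relative to the Turing join of $B$ and $x$, a $2^{-t}$-approximation of $y$ together with the value $\rho:=\lfloor|x-y|\rfloor_r$ already determines $y$ to precision $r$ up to $\tfrac{3\varepsilon}{\sigma}r+O(\log r)$ additional bits. Passing between $\mathcal{B}_{2^{-r}}(\cdot)$ and $\lfloor\cdot\rfloor_r$, and between $\mathcal{B}_{2^{-t}}(y)$ and $\lfloor y\rfloor_t$, via \eqref{eq:dyadic} and \eqref{eq:dyadiccond}, it suffices to prove $K^{B,x}(\lfloor y\rfloor_r\mid\lfloor y\rfloor_t)\le K^{B,x}(\rho\mid\lfloor y\rfloor_t)+\tfrac{3\varepsilon}{\sigma}r+K(\varepsilon,\eta)+O(\log r)$, since $K^{B,x}(\rho\mid\lfloor y\rfloor_t)=K_{r,t}^{B,x}(|x-y|\mid y)+O(\log r)$. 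Throughout, the routine $O(1)$-scale bookkeeping in the precision parameter is carried silently; in particular condition~(ii) may have to be invoked for a ball $\mathcal{B}_{2^{O(1)-t}}(y)$ slightly larger than $\mathcal{B}_{2^{-t}}(y)$, which by \eqref{eq:condprecision} affects the final bound only by $O(\log r)$.

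First I would fix the candidate set. Let $E$ be the set of $z\in\D_r^2$ such that (a) $|z-\lfloor y\rfloor_t|<2^{2-t}$, (b) $\big||x-z|-\rho\big|<2^{2-r}$, and (c) $K_r^{B}(z)\le(\eta+\tfrac{\varepsilon}{2})r+O(\log r)$, where the implied constant in (c) is chosen below so that $\lfloor y\rfloor_r$ qualifies. Given $\rho$, $\lfloor y\rfloor_t$, $r$, $t$, $\varepsilon$, $\eta$ and oracles $B$ and $x$, the set $E$ is computably enumerable: (a) and (b) are decidable to the needed precision, and (c) is enumerable relative to $B$ by dovetailing $U^{B}$. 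By \eqref{eq:dyadic} and hypothesis~(i), $K_r^{B}(\lfloor y\rfloor_r)\le K_r^{B}(y)+O(\log r)\le(\eta+\tfrac\varepsilon2)r+O(\log r)$, and $\lfloor y\rfloor_r$ plainly satisfies (a)--(b) (since $\rho$ and $|x-\lfloor y\rfloor_r|$ are each within $2^{1-r}$ of $|x-y|$), so $\lfloor y\rfloor_r\in E$. Geometrically, $E$ is a set of $r$-dyadic points lying in the intersection of an $O(2^{-t})$-ball about $y$ with an $O(2^{-r})$-neighborhood of the circle of radius $\rho$ about $x$.

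The crux is the bound $|E|\le 2^{\frac{3\varepsilon}{\sigma}r+O(\log r)}$, and this is where I expect the real work to be. Fix $z\in E$ with $z\ne y$. Since $z$ is $r$-dyadic with $\big||x-z|-\rho\big|<2^{2-r}$ and $|\rho-|x-y||<2^{-r}$, Observation~\ref{obs:existProj} (applied with $r$ decreased by $O(1)$) produces $z^{*}\in\R^{2}$ with $|z^{*}-z|<2^{O(1)-r}$ and $|x-z^{*}|=|x-y|$; moreover $z^{*}\in\mathcal{B}_{2^{O(1)-t}}(y)$, so condition~(ii) applies and gives $K_r^{B}(z^{*})\ge\eta r+\min\{\varepsilon r,\ \sigma(r-s)-\varepsilon r\}$ with $s=-\log|y-z^{*}|$. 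Comparing the precision-$r$ complexities of the $O(2^{-r})$-close points $z$ and $z^{*}$ (via \eqref{eq:precision}) shows $K_r^{B}(z^{*})\le K_r^{B}(z)+O(\log r)$, which together with (c) forces $\min\{\varepsilon r,\ \sigma(r-s)-\varepsilon r\}\le\tfrac{\varepsilon}{2}r+O(\log r)$. For large $r$ the first argument of the minimum exceeds the right-hand side, so we must have $\sigma(r-s)-\varepsilon r\le\tfrac{\varepsilon}{2}r+O(\log r)$, i.e.\ $r-s\le\tfrac{3\varepsilon}{2\sigma}r+O(\log r)$, i.e.\ $|y-z|\le 2^{-r+\frac{3\varepsilon}{2\sigma}r+O(\log r)}=:\delta$. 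Thus every point of $E$ lies within distance $\delta$ of $y$ and within $O(2^{-r})$ of a circle whose curvature may be assumed bounded (condition~(ii) already confines attention to $\mathcal{B}_{2^{-t}}(y)$, and in the intended applications a rescaling makes $|x-y|\asymp 1$). A routine count of $r$-dyadic points in this $O(\delta)$-long, $O(2^{-r})$-thick sliver then gives $|E|=2^{\frac{3\varepsilon}{2\sigma}r+O(\log r)}\le 2^{\frac{3\varepsilon}{\sigma}r+O(\log r)}$, the cruder constant $\tfrac{3\varepsilon}{\sigma}$ covering the case in which the thickened arc is two-dimensional at the grid scale. Hence, by \eqref{eq:cardcomp}, $K^{B,x}(\lfloor y\rfloor_r\mid\lfloor y\rfloor_t,\rho,r,t,\varepsilon,\eta)\le\log|E|+O(\log\log|E|)\le\tfrac{3\varepsilon}{\sigma}r+O(\log r)$.

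To finish, I would combine this with a shortest $(B,x)$-program for $\rho$ given $\lfloor y\rfloor_t$, with $O(\log r)$ bits describing $r$ and $t$, and with $K(\varepsilon,\eta)$ bits describing $\varepsilon$ and $\eta$; by \eqref{eq:soi} and subadditivity of conditional complexity this yields $K^{B,x}(\lfloor y\rfloor_r\mid\lfloor y\rfloor_t)\le K^{B,x}(\rho\mid\lfloor y\rfloor_t)+\tfrac{3\varepsilon}{\sigma}r+K(\varepsilon,\eta)+O(\log r)$, and translating both sides into the $K_{r,t}$ notation via \eqref{eq:dyadic} and \eqref{eq:dyadiccond} gives the lemma. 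The main obstacle is the size estimate for $E$: pushing the approximate ``on the circle'' condition through Observation~\ref{obs:existProj} without losing control of the scale $s$, and then counting grid points in the resulting sliver tightly enough to reach the stated constant. The delicate structural point is that the piecewise form of the lower bound in~(ii)---linear in $r-s$ but capped at $\eta r+\varepsilon r$---is precisely what excludes both the candidates that are moderately far from $y$ and those that are far, leaving only a $2^{O(\varepsilon r/\sigma)}$-sized neighborhood of $y$.
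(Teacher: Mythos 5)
The paper does not prove this lemma; it imports it from Fiedler and Stull~\cite{FieStu24a}, so there is no in-paper argument to compare against. Evaluated on its own terms, your proposal is in the right spirit and is essentially the standard enumeration-and-count template used throughout this literature: define a set $E$ of candidate $r$-dyadic points, computably enumerable given $\lfloor y\rfloor_t$, $\rho:=\lfloor|x-y|\rfloor_r$, the parameters $r,t,\varepsilon,\eta$, and the oracles $B,x$; verify $\lfloor y\rfloor_r\in E$ via hypothesis~(i); push each candidate $z$ onto the circle $\{w:|x-w|=|x-y|\}$ using Observation~\ref{obs:existProj} and invoke hypothesis~(ii) together with~\eqref{eq:precision} to force every $z\in E$ into a $2^{-r+\frac{3\varepsilon}{2\sigma}r+O(\log r)}$-neighborhood of $y$; and close with~\eqref{eq:cardcomp}. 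The arithmetic lands on the claimed $\tfrac{3\varepsilon}{\sigma}r$ allowance, and the final translations between $K_{r,t}(\cdot\mid\cdot)$ and conditional complexities of truncations via~\eqref{eq:dyadic},~\eqref{eq:dyadiccond} are the right moves.

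One bookkeeping point is handled too casually, and as written it is a gap. Your candidates $z$ range over $\mathcal{B}_{2^{2-t}}(\lfloor y\rfloor_t)\subseteq\mathcal{B}_{2^{3-t}}(y)$, so the surrogate $z^*$ produced by Observation~\ref{obs:existProj} can lie outside $\mathcal{B}_{2^{-t}}(y)$, and hypothesis~(ii) then says nothing about it. You cannot ``invoke~(ii) for a slightly larger ball'': (ii) is a fixed premise, not a parameter you get to enlarge, and~\eqref{eq:condprecision} does not manufacture lower bounds on $K^B_r(z^*)$ that the hypothesis does not supply. Without ruling out candidates near the boundary of the $2^{-t}$-ball, $E$ could a priori contain on the order of $2^{r-t}$ grid points along the annulus, which is far too many. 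The repair has to live on your side of the construction: condition additionally on $\lfloor y\rfloor_{t+c}$ for a small absolute constant $c$ (this costs $O(1)$ bits given $\lfloor y\rfloor_t$, absorbed by the $O(\log r)$ term), and shrink the enclosing radius in the definition of $E$ so that every $z\in E$ has $|z-y|<2^{-t-1}$, which then forces $|z^*-y|<2^{-t}$ for $r$ large and makes~(ii) genuinely applicable. With that adjustment the rest of your argument closes as you describe.
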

    
    The following lemma is a consequence of a recent result of Cs\"ornyei and Stull \cite{CsoStu25rad}.
    \begin{lemma}\label{lem:slicingLines}
        Let $x\in \R^2$, $e\in \Sphere^1$, $\sigma, \in \Q \cap (0,1)$, $\varepsilon>0$, $C\in\N, A\subseteq\N$, and $t,r\in \N$. Suppose that $r$ is sufficiently large and that the following hold
    \begin{enumerate}
        \item $0<\sigma<\dim^A(x)\leq 1$,
        \item $\frac{r}{C}\leq t <r$,
        \item $K_s^{A, x}(e)\geq\sigma s - C\log s$ for all $s\leq t$
    \end{enumerate}
    Then
    \begin{equation*}
        K_r^A(x\mid p_ex, e)\leq K^A_r(x) - \frac{\sigma}{2}(r+t) + \varepsilon r.
    \end{equation*}
    \end{lemma}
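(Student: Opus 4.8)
The plan is to reduce Lemma~\ref{lem:slicingLines} to an effective \emph{projection} lower bound and then invoke the theorem of Cs\"ornyei and Stull~\cite{CsoStu25rad} to supply that bound. As in the proofs above, relativize everything to $A$ and suppress it from the notation. The starting point is an identity obtained from the symmetry of information~\eqref{eq:soi}: splitting $K_r(x,e)$ in two ways, as $K_r(e)+K_r(x\mid e)+O(\log r)$ and as $K_r(x)+K_r(e\mid x)+O(\log r)$, and splitting $K_r(x\mid e)$ as $K_r(p_ex\mid e)+K_r(x\mid p_ex,e)+O(\log r)$ (applying~\eqref{eq:soi} with $\lfloor e\rfloor_r$ fixed as a condition and using that $p_ex$ is computable from $(x,e)$), and moving between balls and dyadic truncations via~\eqref{eq:dyadic} and~\eqref{eq:dyadiccond}, one gets
\[
    K_r(x)-K_r(x\mid p_ex,e)=\bigl(K_r(e)-K_r(e\mid x)\bigr)+K_r(p_ex\mid e)+O(\log r).
\]
The right-hand side is, up to $O(\log r)$, the algorithmic information shared between $x$ and the pair $(p_ex,e)$; its first term is nonnegative up to $O(1)$, and $K_r(p_ex\mid e)\geq K^{e}_r(p_ex)-O(\log r)$ because an oracle for $e$ can reconstruct a precision-$r$ approximation of $e$ at cost $O(\log r)$. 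Hence it suffices to prove the projection bound
\[
    K^{e}_r(p_ex)\geq\tfrac{\sigma}{2}(r+t)-\tfrac{\varepsilon}{2}r,
\]
after which Lemma~\ref{lem:slicingLines} follows by absorbing the $O(\log r)$ error and any additive $K(\sigma,\varepsilon,C)$ terms into $\varepsilon r$ for $r$ large. (If~\cite{CsoStu25rad} already supplies a direct upper bound on the fiber complexity $K_r(x\mid p_ex,e)$, this symmetry-of-information step is unnecessary, and one only checks that hypotheses~(1)--(3) imply its hypotheses.)

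The projection bound is where the real content lies, and it is exactly an effective projection/slicing theorem of Cs\"ornyei and Stull~\cite{CsoStu25rad}. Hypothesis~(3) is the standard ``good direction'' assumption of effective projection theorems (cf.~\cite{LutStu24,Stull22c}): it says $e$ is incompressible relative to $x$ at rate $\sigma$ at every scale up to $t$. Hypothesis~(1), $\dim(x)\leq 1$, rules out the saturated regime --- if $\dim(x)$ exceeded $1$, the fiber of $p_e$ through $x$ would itself carry positive dimension and the bound could fail. Granting these, the cited theorem shows that $p_ex$ inherits the full complexity rate $\sigma$ on the scales $[0,t]$ where $e$ is known to be good, and at least the ``half-information'' rate $\sigma/2$ on the remaining scales $(t,r]$; that is,
\[
    K^{e}_r(p_ex)\geq\sigma t+\tfrac{\sigma}{2}(r-t)-o(r)=\tfrac{\sigma}{2}(r+t)-o(r).
\]
Hypothesis~(2), $r/C\leq t<r$, keeps $t$ proportional to $r$, so that the error terms of the cited theorem remain $o(r)$ and are dominated by $\tfrac{\varepsilon}{2}r$ once $r$ is large; the constant $C$ in~(2)--(3) is absorbed into that $o(r)$.

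The main obstacle in assembling the statement from~\cite{CsoStu25rad} is bookkeeping, not new mathematics. First, we need the cited bound scale-by-scale and uniformly for all large $r$ --- not merely as a $\liminf$ (dimension) statement --- so that it applies at the single precision $r$ in Lemma~\ref{lem:slicingLines}; effective projection theorems proved under a \emph{uniform} randomness hypothesis such as~(3) do take this form, but the dependence of the error on $C$ and on the ratio $t/r$ must be tracked. Second, the cited result must be relativizable to the arbitrary oracle $A$, the $A$-relativized forms of (1) and (3) implying its $A$-relativized hypotheses; this is routine, since every estimate recalled in Section~\ref{ssec:psp} relativizes. Third, the conversions among $K_r(\cdot\mid e)$, $K^{e}_r(\cdot)$, $K_r(\cdot)$, and their dyadic truncations must be organized through~\eqref{eq:soi}, \eqref{eq:dyadic}, \eqref{eq:dyadiccond}, and~\eqref{eq:precision} so that the accumulated loss stays $O(\log r)$. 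The genuinely delicate point --- handled inside~\cite{CsoStu25rad}, and the reason the bound features $\tfrac12(r+t)$ rather than $t$ --- is the behaviour on the tail $(t,r]$, where no randomness of $e$ relative to $x$ is available: there one can only fall back on the ``projections preserve half the information'' mechanism that also underlies Theorem~\ref{thm:orth}, and confirming that this half-rate survives on the tail and glues onto the full rate on $[0,t]$ is the crux of the cited argument.
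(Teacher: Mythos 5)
Your high-level strategy — reduce $K^A_r(x\mid p_ex,e)$ via symmetry of information to a complexity bound supplied by Cs\"ornyei and Stull~\cite{CsoStu25rad} — matches the paper's, but the reduction you perform is lossy in a way that the paper's is not, and this creates a real gap.

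The paper's proof sets $\ell$ to be the line through $x$ in direction $e$, cites Theorem~3.5 of~\cite{CsoStu25rad} for the lower bound
\[
K^A_r(\ell)\;\geq\; K^A_r(e\mid x)+\min\Big\{\dim^A(x),\,\tfrac{\sigma t/r+\dim^A(x)}{2},\,1\Big\}\,r-\varepsilon r-O(\log r),
\]
and then uses $K^A_r(x\mid p_ex,e)=K^A_r(x\mid\ell)+O(\log r)$ together with the symmetry-of-information identity $K^A_r(x\mid\ell)=K^A_r(x)+K^A_r(e\mid x)-K^A_r(\ell)+O(\log r)$. The decisive feature is that the $K^A_r(e\mid x)$ term from symmetry of information is exactly cancelled by the $K^A_r(e\mid x)$ term in the cited lower bound on $K^A_r(\ell)$. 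Nothing is thrown away.

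Your identity $K_r(x)-K_r(x\mid p_ex,e)=\bigl(K_r(e)-K_r(e\mid x)\bigr)+K_r(p_ex\mid e)+O(\log r)$ is correct (and, using $K_r(\ell)=K_r(e)+K_r(p_ex\mid e)+O(\log r)$, it is equivalent to the paper's identity). But you then discard the mutual-information term $K_r(e)-K_r(e\mid x)$ by nonnegativity and reduce the lemma to the stand-alone projection bound $K^{e}_r(p_ex)\geq\tfrac{\sigma}{2}(r+t)-\tfrac{\varepsilon}{2}r$. That is a \emph{strictly stronger} input than Theorem~3.5 of~\cite{CsoStu25rad}: after rewriting, the cited theorem only gives $K_r(p_ex\mid e)+\bigl(K_r(e)-K_r(e\mid x)\bigr)\geq\tfrac{\sigma}{2}(r+t)-o(r)$, which permits $K_r(p_ex\mid e)$ to be small whenever the mutual information between $e$ and $x$ is large. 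Nothing in hypotheses (1)--(3) controls the mutual information at scales between $t$ and $r$, so your discarded term cannot be recovered from the hypotheses; you genuinely need the stronger projection statement. You assert that~\cite{CsoStu25rad} supplies it (``inherits full rate $\sigma$ on $[0,t]$ and half-rate $\sigma/2$ on $(t,r]$''), but that is not the theorem the paper cites, and you give no argument for it. The ``main obstacle is bookkeeping'' claim therefore undersells the real difficulty: the crux is producing or correctly citing a projection bound with no $K_r(e\mid x)$ deficit, and that is exactly where your proof is missing a step. (A minor remark: the inequality $K_r(p_ex\mid e)\geq K^e_r(p_ex)$ holds simply because a full oracle for $e$ is at least as strong as a precision-$r$ condition on $e$; no reconstruction argument or $O(\log r)$ loss is needed there.)
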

    \begin{proof}
        Let $\ell$ be the line containing $x$ in the direction of $e$. By Theorem 3.5 of \cite{CsoStu25rad}, 
        \begin{equation}
            K^A_r(\ell) \geq K^A_r(e\mid x) + \min\{\dim^A(x), \frac{\frac{\sigma t}{r}+\dim^A(x)}{2}, 1\}r - \varepsilon r - O(\log r).
        \end{equation}
    
        Therefore, we see that 
        \begin{align*}
            K^A_r(x\mid p_e x, e) &= K^A_r(x \mid \ell) + O(\log r)\\
            &= K^A_r(x) + K^A_r(e\mid x) - K^A_r(\ell) + O(\log r)\\
            &\leq K^A_r(x) - \frac{\frac{\sigma t}{r}+\dim^A(x)}{2} r + \varepsilon r + O(\log r)\\
            &\leq K^A_r(x) - \frac{\sigma t + \sigma r}{2} + \varepsilon r\\
            &= K^A_r(x) - \frac{\sigma}{2}(r+t) + \varepsilon r.
        \end{align*}
    \end{proof}

    \begin{theorem}\label{thm:effDistanceThm}
    Suppose that $x, y\in\R^2$, $e = \frac{y-x}{\vert y-x\vert}$, $0 < \sigma < 1$ and $A\subseteq\N$ satisfy the following conditions.
    \begin{description}
        \item[\textup{(C1)}] $\dim^A(x), \dim^A(y) > \sigma$
        \item[\textup{(C2)}] $K^{A,x}_r(e) > \sigma r - O(\log r)$ for all $r$.
        \item[\textup{(C3)}] $K^{A, x}_r(y) \geq K^{A}_r(y) - O(\log r)$ for all sufficiently large $r$. 
        \item[\textup{(C4)}] $K^{A}_{t,r}(e\mid y) > \sigma t - O(\log r)$ for all sufficiently large $r$ and $t \leq r$.
    \end{description} Then $\dim^{x,A}(\vert x-y\vert) \geq  \frac{3\sigma}{4}$.
    \end{theorem}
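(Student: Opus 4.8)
The plan is to establish, for every rational $\varepsilon>0$ and every sufficiently large $r$, the scale-wise bound
\[
K_r^{A,x}(|x-y|)\ \ge\ \Bigl(\tfrac{3\sigma}{4}-O(\varepsilon)\Bigr)r-O(\log r);
\]
since $\dim^{x,A}(|x-y|)=\liminf_r K_r^{A,x}(|x-y|)/r$, letting $r\to\infty$ and then $\varepsilon\to 0$ yields the theorem. Fix $\varepsilon$ and $r$. By (C1), $K_r^A(y)\ge\sigma r-o(r)>\tfrac\sigma2 r$ for large $r$, so there is a least scale $t^\ast\le r$ with $K_{t^\ast}^A(y)\ge\tfrac\sigma2 r$; then $K_{t^\ast}^A(y)=\tfrac\sigma2 r+O(\log r)$, and since $y\in\R^2$ we have $t^\ast\ge\tfrac\sigma4 r-O(\log r)$. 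Choose the constant $C$ of Lemma~\ref{lem:slicingLines} large enough in terms of $\sigma$ that $t^\ast\ge r/(C+1)$. I will bound the ``low'' interval $[0,t^\ast]$ and the ``high'' interval $[t^\ast,r]$ separately, obtaining contributions $\approx\tfrac14\sigma r$ and $\approx\tfrac12\sigma r$; that $t^\ast$ is the $\tfrac\sigma2 r$-crossing is exactly what balances these two numbers to $\tfrac34\sigma r$.

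For $[0,t^\ast]$ I would apply Theorem~\ref{thm:pinned} at precision $t^\ast$ with the roles of $x$ and $y$ interchanged, which is legitimate because $|x-y|=|y-x|$, the direction $e_{x,y}$ is antipodal to $e=e_{y,x}$ so $\dim^{A,x}(e_{x,y})=\dim^{A,x}(e)\ge\sigma$ by (C2), and the hypothesis $K_{t^\ast}^{A,x}(y)\ge K_{t^\ast}^A(y)-\delta t^\ast$ holds by (C3) for a small enough rational $\delta$. This gives $K_{t^\ast}^{A,x}(|x-y|)\ge\tfrac12 K_{t^\ast}^A(y)-\varepsilon t^\ast-O(\log r)$.

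For $[t^\ast,r]$ I would run a surrogate-circle-point argument. Let $D=D(r,y,\eta)$ be the Lutz--Stull oracle of Lemma~\ref{lem:oracles} with $\eta$ a rational slightly below $\sigma$, put $B=(A,D)$, and verify the two hypotheses of the Fiedler--Stull enumeration lemma (Lemma~\ref{lem:distanceEnumeration}) for $B,x,y,t^\ast$. Hypothesis (i), $K_r^B(y)\le(\eta+\tfrac\varepsilon2)r$, is immediate from property (i) of Lemma~\ref{lem:oracles}. Hypothesis (ii) requires every $z$ with $|x-z|=|x-y|$ and $|y-z|=2^{-\tau}\le 2^{-t^\ast}$ to satisfy $K_r^B(z)\ge\eta r+\min\{\varepsilon r,\sigma(r-\tau)-\varepsilon r\}$; when $\tau$ is within $o(r)$ of $r$ this is immediate (then $K_r^B(z)=K_r^B(y)+O(\log r)=\eta r+O(\log r)$ and the right side is $\le\eta r$), and for the remaining $\tau$ I would combine Stull's lower bound (Lemma~\ref{lem:lowerBoundOtherPointDistance})
\[
K_r^B(z)\ \ge\ K_\tau^B(y)+K_{r-\tau,r}^B(x\mid y)-K_{r-\tau}^B(x\mid p_{e'}x,e')-O(\log r),\qquad e'=\tfrac{y-z}{|y-z|},
\]
with three facts: (a) (C3), (C4), symmetry of information, and property (ii) of Lemma~\ref{lem:oracles} give $K_{r-\tau,r}^B(x\mid y)\ge K_{r-\tau}^A(x)-O(\log r)$; (b) because $z$ lies on the circle about $x$ through $y$, elementary geometry forces $e'$ to agree with $e^\perp$ to precision $\approx\tau$, so $K_\mu^{A,x}(e')\ge\sigma\mu-O(\log\mu)$ for all $\mu\le\tau$ by (C2), whence Lemma~\ref{lem:slicingLines} applies to $x,e'$ at intermediate scale $\theta=\min\{\tau,r-\tau\}$ and gives $K_{r-\tau}^B(x\mid p_{e'}x,e')\le K_{r-\tau}^A(x)-\tfrac\sigma2\bigl((r-\tau)+\theta\bigr)+\varepsilon(r-\tau)+O(\log r)$; and (c) $K_\tau^B(y)=\min\{\eta r,K_\tau^A(y)\}+O(\log r)\ge\tfrac\sigma2 r-O(\log r)$ for $\tau\ge t^\ast$, by property (i) of Lemma~\ref{lem:oracles} and monotonicity of $t\mapsto K_t^A(y)$. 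Substituting (a) and (b) the $K_{r-\tau}^A(x)$ terms cancel, and with (c) a short case check on whether $\theta=\tau$ or $\theta=r-\tau$ confirms $K_r^B(z)\ge\eta r+\min\{\varepsilon r,\sigma(r-\tau)-\varepsilon r\}$; the slicing gain is only just sufficient here, and that is where the threshold $\tfrac\sigma2 r$—hence the exponent $\tfrac34$—comes from. Lemma~\ref{lem:distanceEnumeration} now gives $K_{r,t^\ast}^{B,x}(|x-y|\mid y)\ge K_{r,t^\ast}^{B,x}(y\mid y)-\tfrac{3\varepsilon}{\sigma}r-O(\log r)$, and property (iv) of Lemma~\ref{lem:oracles} together with (C3) gives $K_{r,t^\ast}^{B,x}(y\mid y)\ge\eta r-K_{t^\ast}^A(y)-O(\log r)$.

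It remains to combine the two intervals. Conditioning on $\lfloor y\rfloor_{t^\ast}$ is at least as informative as conditioning on $\lfloor|x-y|\rfloor_{t^\ast}$ once $x$ is present as an oracle, and adjoining $D$ only dec
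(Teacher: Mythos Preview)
Your overall architecture matches the paper's proof almost exactly: split at the scale $t^\ast$ where $K^A_{t^\ast}(y)\approx\tfrac{\sigma r}{2}$, apply Theorem~\ref{thm:pinned} (with the roles of $x,y$ swapped) on $[0,t^\ast]$ to get $\approx\tfrac{\sigma r}{4}$, build the oracle $D=D(r,y,\eta)$ with $\eta$ just below $\sigma$, verify hypothesis~(ii) of Lemma~\ref{lem:distanceEnumeration} via Lemma~\ref{lem:lowerBoundOtherPointDistance}, and combine. The paper does precisely this. The only structural variation is your treatment of the regime $\tau\ge r/2$: where the paper invokes (C4) directly (via the observation $|p_ey-p_ez|<2^{-r'}$), you instead run Lemma~\ref{lem:slicingLines} at full scale $r-\tau$ with intermediate scale $\theta=r-\tau$; both routes produce the same inequality $K^B_r(z)\ge K^B_\tau(y)+\sigma(r-\tau)-O(\varepsilon r)$, so this is a cosmetic difference.

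There is, however, a real gap in your ``short case check.'' Your stated ingredient (c) is only $K^B_\tau(y)\ge\tfrac{\sigma r}{2}$, and that is \emph{not} enough when $\tau$ lies well inside $(r/2,\,r)$. Concretely, for $\tau=3r/4$ your bound gives $K^B_r(z)\ge\tfrac{\sigma r}{2}+\sigma(r-\tau)-\varepsilon r=\tfrac{3\sigma r}{4}-\varepsilon r$, which is strictly below the required $\eta r+\varepsilon r\approx\sigma r$. The verification actually needs the finer dichotomy on $K^B_\tau(y)=\min\{\eta r,K^A_\tau(y)\}$: when $K^A_\tau(y)\ge\eta r$ the check is immediate, and when $K^A_\tau(y)<\eta r$ one must invoke the \emph{strict} inequality $\dim^A(y)>\sigma$ from (C1) (so $K^A_\tau(y)>\sigma'\tau$ for some fixed $\sigma'>\sigma$), which supplies the missing margin once $\varepsilon$ is taken small relative to $\sigma'-\sigma$. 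The paper also glosses over this point (``it is not difficult to verify, regardless of which term is minimal\ldots''), but its formulation at least signals that both branches of the $\min$ must be examined. Your separate handling of ``$\tau$ within $o(r)$ of $r$'' is likewise too loose: $K^B_r(z)=K^B_r(y)+O(\log r)$ only holds when $r-\tau=O(\log r)$, not $o(r)$; for $r-\tau$ of order $\varepsilon r$ you lose $2(r-\tau)$, which must be absorbed by choosing $\eta\le\sigma-O(\varepsilon/\sigma)$. Finally, note that Lemma~\ref{lem:slicingLines} requires the intermediate scale to be strictly less than the full scale, so $\theta=r-\tau$ is not literally admissible; take $\theta=r-\tau-1$ instead.
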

    \begin{proof}
        Let $\varepsilon \in \Q_+$. Let $r\in\N$ be sufficiently large. Let $t\in\N$ be the precision
        \[
            t = \max\left\{s < r \;\middle\vert\; K^A_{s}(y) \leq \frac{\sigma r}{2}\right\}.
        \]
        By (C1) such a $t$ exists. Moreover, by \eqref{eq:precision},
        \begin{equation}\label{eq:boundComplexityAtT}
        K^{A}_t(y) = \frac{\sigma r}{2}  - O(\log r).
        \end{equation}

        It is immediate from our assumptions that the conditions of Theorem \ref{thm:pinned}, relative to $A$, are satisfied for $x,y, \varepsilon / 2$ and $t$. Therefore, 
        \begin{equation}\label{eq:lowerBoundDistanceAtT}
        K^{A,x}_{t}(\vert x-y\vert ) \geq \frac{\sigma r}{4} - \varepsilon r.
        \end{equation}
        
        We now show that 
        \begin{equation}\label{eq:mainThmEffDimEq1}
        K^{A,x}_{r,t}(\vert x-y\vert \mid y) \geq \frac{\sigma r}{2} - 4\varepsilon r.
        \end{equation}
    
        Let $\eta\in\Q$ be rational such that $\sigma r - 5\varepsilon r\leq \eta r \leq \sigma r- 4\varepsilon r$. Let $D = (r, y, \eta)$ be the oracle of Lemma \ref{lem:oracles} relative to $A$. Let $z \in \R^2$  such that $z\in \mathcal{B}_{2^{-t}}(y)$, and $\vert x-y\vert  = \vert x-z\vert $; this exists by Observation~\ref{obs:existProj}. Let $s = -\log \vert y-z\vert $. There are two cases to consider. 
        
        \paragraph*{Case 1: $s \geq \frac{r}{2} - \log r$.}
            By Observation \ref{obs:geometricObs}, 
            \[
                \vert p_e y - p_e z\vert  < r^2 2^{-r}.
            \]
            Let $r^\prime = r - 2\log r$. Then, by Lemma \ref{lem:lowerBoundOtherPointDistance},
            \begin{align*}
                K^{A,D}_r(z) &\geq K^{A,D}_{r^\prime}(z)\notag\\
                &\geq K^{A,D}_s(y) + K^{A,D}_{r^\prime-s,r^\prime}(e\mid y) - O(\log r)\tag{by Lemma \ref{lem:lowerBoundOtherPointDistance}}\\
                &= K^{A,D}_s(y) + K^{A,D}_{r-s,r}(e\mid y) - O(\log r)\tag{by \eqref{eq:precision}}\\
                &\geq K^{A,D}_s(y) + K^A_{r-s,r}(e\mid y)  - O(\log r)\tag{by Lemma \ref{lem:oracles}(ii)}\\
                &\geq K^{A,D}_s(y) + \sigma(r - s) -\frac{\varepsilon r}{2} - O(\log r)\tag{by (C4)}.
            \end{align*}
            By Lemma \ref{lem:oracles}(i),
            \[
                K^{A,D}_s(y) = \min\{\eta r, K^A_s(y)\} + O(\log r).
            \]
            It is not difficult to verify that, regardless of which term is minimal, 
            \begin{equation*}
                K^{A,D}_r(z) \geq \eta r + \min\{\varepsilon r, r - s -\varepsilon r\}.
            \end{equation*}
            
        \paragraph*{Case 2: $s < \frac{r}{2} - \log r$.}
            Let $e^\prime = \frac{y-z}{\vert y-z\vert }$. It is clear by our assumptions that the conditions of Lemma~\ref{lem:slicingLines} are satisfied for $x, e^\prime, \varepsilon, s, r-s$ and $C^\prime$, where $C^\prime$ depends only on $x, y$ and $e$.
        
            By Lemma \ref{lem:lowerBoundOtherPointDistance},
            \begin{align}
                K^{A,D}_r(z) &\geq K^{A,D}_s(y) + K^{A,D}_{r-s, r}(x\mid y)- K^{A,D}_{r-s}(x\mid p_{e^\prime}x, e^\prime) - O(\log r)\tag{by Lemma \ref{lem:lowerBoundOtherPointDistance}}\\
                &\geq K^{A,D}_s(y) + K^A_{r-s, r}(x\mid y) K^{A,D}_{r-s}(x\mid p_{e^\prime}x, e^\prime) - O(\log r)\tag{by Lemma \ref{lem:oracles}(ii)}\\
                &\geq K^{A,D}_s(y) + K^{A}_{r-s}(x)  - K^{A,D}_{r-s}(x\mid p_{e^\prime}x, e^\prime) - O(\log r)\tag{by (C3)}\\
                &\geq K^{A,D}_s(y) + K^{A}_{r-s}(x)  - K^{A}_{r-s}(x\mid p_{e^\prime}x, e^\prime) - O(\log r)\label{eq:mainThmEffDim2}.
            \end{align}
            By (C2) we can apply Lemma~\ref{lem:slicingLines}, relative to $A$, which yields
            \begin{center}
            $K^{A}_{r-s}(x\mid p_{e^\prime} x, e^\prime) \leq K^A_{r-s}(x) - \frac{\sigma}{2}(r)  + \varepsilon (r-s)$.
            \end{center}
            By combining this with (\ref{eq:mainThmEffDim2}) we see that
            \begin{equation*}
                K^{A,D}_r(z)\geq K^{A,D}_s(y) + \frac{\sigma}{2}r - \varepsilon r - O(\log r).
            \end{equation*}
            Since $s \geq t$, by (\ref{eq:boundComplexityAtT}), $K^{A,D}_s(y) \geq \frac{\sigma r}{2} - O(\log r)$. Thus,
            \begin{align*}
            K^{A,D}_r(z) &\geq K^{A,D}_s(y) + \frac{\sigma}{2}r - \varepsilon r - O(\log r)\\
            &\geq \sigma r   - \varepsilon r - O(\log r)\\
            &\geq (\eta + \varepsilon) r.
            \end{align*}
            
        \bigskip
        Applying Lemma \ref{lem:distanceEnumeration}, relative to $(A,D)$, yields
        \begin{equation*}
            K_{r,t}^{A,D,x}(y \mid y) \leq K^{A,D,x}_{r,t}(\vert x-y\vert \mid y) +3\varepsilon r + K(\varepsilon,\eta)+O(\log r).
        \end{equation*}
        Rearranging and using the fact that additional information does not increase complexity shows that
        \begin{equation}\label{eq:mainThmEffDim4}
            K^{A,x}_{r,t}(\vert x-y\vert \mid y) \geq K^{A,D,x}_{r,t}(y\mid y) -4\varepsilon r.
        \end{equation}
        A standard symmetry of information argument, along with (C3), bounds the right hand side, from which we deduce \eqref{eq:mainThmEffDimEq1}.
        
        Combining this with (\ref{eq:mainThmEffDim4}) and (\ref{eq:lowerBoundDistanceAtT}) we have
        \begin{align*}
            K^{A,x}_r(\vert x-y\vert ) &= K^{A,x}_{t}(\vert x-y\vert ) + K^{A,x}_{r,t}(\vert x-y\vert \mid \vert x-y\vert ) \\
            &\geq K^{A,x}_{t}(\vert x-y\vert ) + K^{A,x}_{r,t}(\vert x-y\vert \mid y) \\
            &\geq \frac{\sigma r}{4} + K^{A,D,x}_{r,t}(y\mid y) -5\varepsilon r\\
            &\geq \frac{\sigma r}{4} + \frac{\sigma r}{2} - 10\varepsilon r - O(\log r)\\
            &= \frac{3\sigma r}{4}  - 10\varepsilon r - O(\log r).
        \end{align*}
        Since this bound holds for all sufficiently large $r$, the conclusion follows.
    \end{proof}

\begin{theorem}\label{thm:pinnedclassical}
    Suppose $E\subseteq\R^2$ is an analytic set such that $s = \dimH(E) \leq 1$. Then,
    \[
        \sup\limits_{x\in E} \dimH(\Delta_x E) \geq \frac{3s}{4}.
    \]
\end{theorem}
\begin{proof}
    Let $\sigma < s$. We will use the important fact of geometric measure theory that Hausdorff dimension is inner regular for analytic sets (see, e.g., \cite{Mattila99}). That is, for every analytic set $G$ and every $\epsilon > 0$, there is a compact subset $K$ such that $\dimH K > \dimH G - \epsilon$. In particular, since $E$ is analytic, there is a compact subset $F\subseteq E$ such that $\dimH(F) > \sigma$. Let $A$ be a Hausdorff oracle for $F$ relative to which $F$ is computably compact. Using Proposition 7 of \cite{Stull22c}, $\Delta_x F$ is computably compact relative to $(A,x)$ for every $x\in\R^2$. Hence, by Hitchcock's point-to-set principle, \eqref{eq:pspCompact}, $(A,x)$ is a Hausdorff oracle for $\Delta_x F$, for every $x\in\R^2$. Therefore, since $\sigma$ can be arbitrarily close to $s$, it suffices to show that
    \[
        \sup\limits_{x\in F} \sup\limits_{y\in F} \dim^{A,x}(|x-y|) \geq \frac{3\sigma}{4}.
    \]

As shown in \cite{FieStu24a}, Lemmas 33 and 34, there are points $x, y\in F$ and $e = \frac{y-x}{|x-y|}$ which satisfy the conditions of Theorem~\ref{thm:effDistanceThm}. We may therefore apply Theorem \ref{thm:effDistanceThm}, to conclude that, for any such $x,y$, we have $\dim^{A,x}(|x-y|)\geq 3\sigma/4$, and the conclusion follows.
\end{proof}

Combined with the point-to-set principle, Theorem~\ref{thm:orth} allows us to generalize, from analytic sets to sets with optimal Hausdorff oracles, Bourgain's theorem on the Hausdorff dimension of \emph{exceptional sets} for orthogonal projections. Marstrand~\cite{Marstrand54} proved that, for every analytic $E\subseteq\R^2$ and almost every $e\in\Sphere^1$, $\dimH(p_e E)=\min\{\dimH(E),1\}$. A direction $e$ is thus ``exceptional'' for $E$ if $\dimH(p_e E)<s$ for some $s\leq \min\{\dimH(E),1\}$; lesser values of $s$ correspond to ``more exceptional'' directions. There is a long and active line of research in bounding the Hausdorff dimension of sets of such directions~\cite{Kaufman68,MR2763000,Mattila19,ren2023,CCLLMS26}. We prove the following theorem.

\begin{theorem}[generalizing Bourgain~\cite{MR2763000}]\label{thm:orthclassical}
    Let $E\subseteq\R^2$, and suppose that optimal Hausdorff oracles for $E$ exist. Then
    \[\dimH\left\{e\in \Sphere^1 \;\middle|\; \dimH(p_e E) < \frac{\dimH E}{2}\right\} = 0.\]
\end{theorem}
\begin{proof}
    Let
    \[F=\left\{e\in \Sphere^1 \;\middle|\; \dimH(p_e E) < \frac{\dimH E}{2}\right\},\]
    and assume toward a contradiction that $\dimH F>\eta$ for some $\eta>0$. Let $A\subseteq\N$ be an optimal Hausdorff oracle for $E$. By~\eqref{eq:psp}, there is some $e\in F$ satisfying $\dim^A(e)>\eta$. Let $B$ be a Hausdorff oracle for $p_e E$.
    
    Let $\varepsilon\in (0,\eta)$ and $\delta=\frac{\varepsilon\eta}{4}$. Since the Turing join of $A$ and $B$ is also an optimal Hausdorff oracle for $E$, we can apply the definition of optimal Hausdorff oracles, with the auxiliary oracle $e$, to say there is some $x\in E$ such that
    \begin{equation}\label{eq:orthclassical1}
        \dim^{A,B}(x)\geq \dim^{A,B,e}(x)\geq \dimH(E)-\delta
    \end{equation}
    and, for almost every $r\in\N$,
    \[K_r^{A,B,e}(x)\geq K_r^{A,B}(x)-\delta r.\]
    Therefore, by the $s=0$ case of Theorem~\ref{thm:orth}, for all sufficiently large $r$ we have
    \[K^{A,B,e}_r(p_e x)\geq \frac{K_r^{A,B}(x)}{2}-\varepsilon r.\]
    Taking limits inferior yields
    \begin{equation}\label{eq:orthclassical2}
        \dim^{A,B,e}(p_e x)\geq \frac{\dim^{A,B}(x)}{2}-\varepsilon.
    \end{equation}
    As $B$ is a Hausdorff oracle for $p_e E$, we also have
    \begin{equation}\label{eq:orthclassical3}
        \dimH(p_e E)\geq \dim^B(p_e x)\geq \dim^{A,B,e}(p_e x).
    \end{equation}
    Combining~\eqref{eq:orthclassical1},~\eqref{eq:orthclassical2}, and~\eqref{eq:orthclassical3} gives us
    \[\dimH(p_e E)\geq \frac{\dimH E}{2}-\frac{\delta}{2}-\varepsilon.\]
    Letting $\varepsilon$ approach 0, this contradicts $e\in F$, so we conclude that $\dimH F=0$.
\end{proof}

\printbibliography
    
\end{document}